\newtheorem{theorem}{Theorem}[section]
\newtheorem{observation}[theorem]{Observation}
\newtheorem{definition}[theorem]{Definition}
\newtheorem{claim}[theorem]{Claim}
\newtheorem{lemma}[theorem]{Lemma}
\newtheorem{polyrule}{Rule}[section]
\def\ie{{\em i.e.}~}
\newenvironment{proofclaim}{
	\noindent \emph{Proof.}
}{%
	\hfill $\diamond$ \\
}
\newcommand{\FAST}{\textsc{$k$-FAST}}
\newcommand{\BIT}{\textsc{$k$-dense-BTI}}
\newcommand{\RTI}{\textsc{$k$-dense-RTI}}
\newcommand{\FASBT}{$k$-FASBT}
\newcommand{\FAS}{$k$-FAS}
\newcommand{\<}{<_{\sigma}}
\begin{document}

\title{Conflict Packing: an unifying technique to obtain polynomial\\ kernels for editing problems on dense instances}

\author{Christophe Paul\thanks{LIRMM, CNRS - Universit\'e Montpellier 2 - \texttt{christophe.paul@lirmm.fr}} \and Anthony Perez\thanks{Univ. Orl\'eans, INSA Centre Val de Loire, LIFO EA 4022 - \texttt{anthony.perez@univ-orleans.fr}} \and St\'ephan Thomass\'e\thanks{LIP - ENS Lyon - \texttt{stephan.thomasse@ens-lyon.fr} }}

\maketitle

\begin{abstract}
We develop a technique that we call \emph{Conflict Packing} in the context of kernelization~\cite{Bod09}, obtaining (and improving) several polynomial kernels for editing problems on dense instances. We apply this technique on several well-studied problems: \textsc{Feedback Arc Set in (Bipartite) Tournaments}, \textsc{Dense Rooted Triplet Inconsistency} and \textsc{Betweenness in Tournaments}. For the former, one is given a (bipartite) tournament $T = (V,A)$ and seeks a set of at most $k$ arcs whose reversal in $T$ results in an acyclic (bipartite) tournament. While a linear vertex-kernel is already known for the first problem~\cite{BFG+11},  using the Conflict Packing allows us to find a so-called \emph{safe partition}, the central tool of the kernelization algorithm in~\cite{BFG+11}, with simpler arguments. For the case of bipartite tournaments, the same technique allows us to obtain a quadratic vertex-kernel. Again, such a kernel was already known to exist~\cite{GX13}, using the concept of so-called \emph{bimodules}. We believe however that our algorithm can provide a better insight on the structural properties of the problem. Regarding \textsc{Dense Rooted Triplet Inconsistency}, one is given a set of vertices $V$ and a \emph{dense} collection $\mathcal{R}$ of rooted binary trees over three vertices of $V$ and seeks a rooted tree over $V$ containing all but at most $k$ triplets from $\mathcal{R}$. As a main consequence of our technique, we prove that the \textsc{Dense Rooted Triplet Inconsistency} problem admits a linear vertex-kernel. This result improves the best known bound of $O(k^ 2)$ vertices for this problem~\cite{GM10}. Finally, we use this technique to obtain a linear vertex-kernel for \textsc{Betweenness in Tournaments}, where one is given a set of vertices $V$ and a dense collection $\mathcal{R}$ of so-called \emph{betweenness triplets} and seeks a linear ordering of the vertices containing all but at most $k$ triplets from $\mathcal{R}$. We would like to mention that this result has recently been improved~\cite{Per13}, based on some of the results we present here. 
\end{abstract}

\section{Introduction}


The theory of \emph{fixed parameter algorithms}~\cite{DF99} has been developed to cope with NP-Hard problems. The goal is to identify a secondary parameter $k$, independent from the data-size $n$, which captures the exponential growth of the complexity cost to solve the problem at hand. That is the complexity of a \emph{fixed parameter tractable} algorithm is $f(k) \cdot n^{O(1)}$, where $f$ is an arbitrary computable function.

As one of the main techniques to design efficient fixed parameter algorithms, \emph{kernelization algorithms}~\cite{Bod09} have attracted a lot of attention during the last few years. A kernelization algorithm transforms in polynomial time, by means of  \emph{reduction rules}, a parameterized instance $(I,k)$ of a parameterized problem into an equivalent instance $(I',k')$, so-called \emph{kernel}, with the property that the parameter $k'$ and the size $|I'|$ of the kernel only depend on $k$. The smaller the size of the kernel is, the faster the problem can be solved. Indeed, the problem can be efficiently tackled on the kernel with any exact algorithm (e.g. bounded search tree or exponential time algorithms). It is well-known that fixed parameterized tractability of a problem is equivalent to the existence of a kernel, but of exponential size only. The standard question for a fixed parameter tractable problem is then to provide a polynomial size kernel or give some evidence of non-existence of such kernel~\cite{BDFH09,BJK11,BTY11}.


The algorithmic tool-kit to design kernelization algorithms is getting richer and richer over the years (the interested reader should refer to standard books \cite{DF99,FG06,Nie06}). In this paper, we systematically apply a kernelization technique, that we called \emph{Conflict Packing}, on four classic parameterized problems: 
\begin{itemize}
\item \textsc{Feedback Arc Set} in Tournaments (\FAST{}) and in Bipartite Tournaments (\FASBT{}): A \emph{feedback arc set} of a digraph $D=(V,A)$ is a subset $F\subseteq A$ of arcs such that reversing the orientation of the arcs of $F$ yields an acyclic digraph. The parameterized \FAST{} problem consists in given a tournament and a parameter $k\in\mathbb{N}$, computing a feedback arc set of size at most $k$ if one exists. The parameterized \FASBT{} problem is the same problem applied on bipartite tournaments.
\item \textsc{Dense Rooted Triplets Inconsistency} (\RTI{}): A \emph{rooted triplet} is a rooted binary tree over three leaves. A
 collection $\mathcal{R}$ of rooted triplets over a leaf set $V$ is \emph{consistent} if there exists a binary tree $T$ over the leaf set $V$ such that every rooted triplet $\{a,b,c\}\in\mathcal{R}$ is homeomorphic to the subtree of $T$ spanning $\{a,b,c\}$. The set $\mathcal{R}$ is \emph{dense} if it contains a rooted triplet for every distinct triplet $\{a,b,c\}\subseteq V$. Given a \emph{dense} set $\mathcal{R}$ of rooted triplets over $V$ and a parameter $k\in\mathbb{N}$, the parameterized \RTI{} problem consists in computing a subset $\mathcal{F}\subseteq \mathcal{R}$ of at most $k$ rooted triplets to edit in order to obtain a consistent set of rooted triplets.

\item \textsc{Dense Betweenness Triples Inconsistency} (\BIT{}): Given a finite set $V$ of vertices and a collection $\mathcal{B}$ of ordered triples of distinct elements, the \textsc{Betweenness} problem asks whether there exists a vertex ordering $\sigma$ such that for every $(a,b,c)\in\mathcal{B}$, either $a<_{\sigma} b<_{\sigma} c$ or $c<_{\sigma} b<_{\sigma} a$. 
An ordered triple of $\mathcal{B}$ will be called a \emph{betweenness constraint}. 
Given a dense set $\mathcal{B}$ of betweenness constraints and a parameter $k\in\mathbb{N}$, the parameterized \BIT{} problem consists in computing a subset $\mathcal{F}\subseteq \mathcal{B}$ of at most $k$ betweenness constraints to edit in order to obtain a consistent set of betweenness constraints.
\end{itemize}

\paragraph{Known results.}
The parameterized \FAST{} problem is a well-studied problem from the combinatorial~\cite{EM65,RP70}  as well as from the algorithmic viewpoint~\cite{Alon06,MathieuS07}. It is known to be NP-complete~\cite{Alon06,CTY07}, but fixed parameter tractable~\cite{ALS09,KS10,RS06}. The first kernelization algorithms for \FAST{}~\cite{ALS09,DGHNT10} yield $O(k^2)$ vertex-kernels. Recently, a kernel with at most $(2 + \epsilon)k$ vertices, $\epsilon >0$, has been proposed~\cite{BFG+11}. This latter result is strongly based on a
PTAS which computes a linear vertex ordering $\sigma$ with at most $(1 + \epsilon)k$ backward arcs (\ie arcs $vu$ with $u <_\sigma v$), $\epsilon >0$. All these kernels rely on the fact that a tournament is acyclic if and only if it does not contain any directed circuit of size three.
The \FASBT{} problem is known to be $NP$-Complete~\cite{GHM07} and fixed parameter tractable~\cite{DGHNT10}. A cubic vertex-kernel was first obtained for this problem by Misra et al.~\cite{MRRS13}, which has been recently improved to a quadratic vertex-kernel by Guo and Xiao~\cite{GX13} through the concept of so-called \emph{bimodules}. Here again, a well-known result states that a bipartite tournament is acyclic if and only if it does not contain any directed circuit on four vertices. 
The parameterized \RTI{} problem is a classical problem from phylogenetics (see e.g.~\cite{BGJ10,SS03}). It is known to be NP-complete~\cite{BGJ10} but fixed parameter tractable~\cite{GB10,GM10}, the fastest algorithm running in time $O(n^ 4) + 2^ {O(k^ {1/3}logk)}$~\cite{GM10}. Moreover, Guillemot and Mnich~\cite{GM10} provided a quadratic vertex-kernel for \RTI{}. 
Finally, the parameterized \BIT{} problem is NP-Complete~\cite{AA07} but fixed parameter tractable~\cite{KS10,S09}.

\paragraph{Contributions.}
It is worth to notice that the four problems we tackle share some similar properties.
They all can be formalized as modification problems on a dense set of fixed arity constraints. For example, in the case of \textsc{Feedback Arc Set}, a constraint is  represented by an arc $uv$ to express that $u$ has to occur before $v$ in the sought vertex ordering. Likewise the rooted triplets and the betweenness constraints form arity three constraints.
They all address complete or dense instances for which the acyclicity or the consistency can be characterized by means of small bounded obstructions: directed circuit of size $3$ for tournaments [folklore] and of size $4$ for bipartite tournaments~\cite{DGHNT10}; inconsistent sub-instance of size $4$ for dense sets of rooted triplets~\cite{BD86,GM10} and for dense set of betweenness constraints (see Lemma~\ref{lem:consistentbit}).

Approaches similar to Conflict Packing have already been used in~\cite{BKM09,WNFC10}. In this paper, using Conflict Packing we are able to either improve the best known kernel bound 
or provide alternative and/or simpler proof of the best bound.
In Section~\ref{sec:fast}, we present a linear-vertex kernel for the parameterized \FAST{} problem and a quadratic one for the parameterized \FASBT{} problem. Such bounds were already known to exist~\cite{BFG+11,GX13}. However, we believe that the proof of \FAST{} is simpler, and we find interesting to propose an unified framework to cope with such problems. Together with the Conflict Packing technique, a key ingredient in our proofs is the notion of \emph{safe partition} introduced in~\cite{BFG+11}. Next, we use the Conflict Packing technique and adapt the notion of safe partition to the context of trees to obtain a linear vertex-kernel for the \RTI{} problem (Section~\ref{sec:rti}). This improves the best known bound of $O(k^2)$ vertices~\cite{GM10}. Finally, we apply the \emph{Conflict Packing} technique together with a sunflower-based reduction rule to the \BIT{} problem. This allows us to obtain a linear-vertex kernel with at most $5k$ vertices (Section~\ref{sec:bit}). Based on some of the results we present here, this latter result has recently been improved to a kernel of size at most $(2+\epsilon)k+4$ vertices~\cite{Per13}.


%
%
%
%
%
%
\section{Feedback Arc Set in tournaments and bipartite tournaments}
\label{sec:fast}


In this section, we consider asymmetric and loopless directed graphs (digraphs for short). Let $D=(V,A)$ be a digraph. An arc oriented from vertex $u\in V$ to $v\in V$ is denoted $uv$. The \emph{in-neighbourhood} of a vertex $x$ is the set $N^-(x)=\{y\in V\mid yx\in A\}$ and the \emph{out-neighbourhood} of $x$ is $N^+(x)=\{y\in V\mid xy\in A\}$. If $V'\subseteq V$, then $D[V']=(V',A')$ is the subdigraph \emph{induced} by $V'$, that is $A'=\{uv \in A\mid u\in V', v\in V'\}$. Similarly, if $A'\subseteq A$, then $D[A']=(V',A')$ denotes the \emph{digraph} where $V'=\{v\in V\mid \exists uv\in A' \mbox{ or } vw\in A'\}$. We say that $D$ is \emph{acyclic} if it does not contain any circuit and that $D$ is \emph{transitive} if for every triple of vertices $\{u, v, w\}\subseteq V$ such that $uv\in A$ and $vw\in A$, then $uw\in A$. A \emph{tournament} $T=(V,A)$ is a complete asymmetric digraph and a \emph{bipartite tournament} $T=(X\uplus Y,A)$ is a complete bipartite digraph (there is an arc between $u$ and $v$ if and only if $u\in X$ and $v\in Y$). We denote by $\mathbf{FAS}(D)$ the size of a minimum feedback arc set of $D$.

A vertex ordering $\sigma$ of $D$ is a total order on the vertex set $V$. We denote by $u<_{\sigma} v$ the fact that $u$ is smaller than $v$ in $\sigma$. It is well-known that every acyclic digraph has a \emph{topological ordering} $\sigma$ of the vertices: for every arc $uv\in A$, $u<_{\sigma} v$. An \emph{ordered digraph} is a triple $D_{\sigma}=(V,A,\sigma)$ where $D=(V,A)$ is a digraph and $\sigma$ a vertex ordering of $D$. An arc $uv\in A$ in $D_{\sigma}$ is \emph{backward} if $v<_{\sigma} u$, \emph{forward} otherwise. Let $uv$ be a backward arc, then $span(uv)=\{w\in V\mid v<_{\sigma} w<_{\sigma} u\}$. A \emph{certificate} $c(uv)$ in $D_{\sigma}$ is the vertex set of a directed path from $v$ to $u$ only composed of forward arcs. Observe that every vertex $w\in c(uv)$ distinct from $u$ and $v$ belongs to $span(uv)$ and that $c(uv)$ induces a circuit in which $uv$ is the unique backward arc. We abusively say that an arc  belongs to a certificate if its two vertices do. A subset $B \subseteq A$ of \emph{backward arcs} can be \emph{certified} whenever there exists a collection $\{c(e) : e \in B\}$ of \emph{arc-disjoint} certificates.

The following Lemma on matching and vertex cover in bipartite graphs will be used in three of our kernels.

\begin{lemma} \label{lem:matching}
Let $ B = (X \uplus Y, E)$ be a bipartite graph, $S$ be a minimum vertex cover of $B$, $S_X = X \cap S$ and $S_Y = Y \cap S$. Any subset $I \subseteq S_X$ can be matched into $Y \setminus S_Y$. 
\end{lemma}
\begin{proof}
By Hall's theorem~\cite{H35}, $I$ can be matched into $Y\setminus S_Y$ iff $|I'| \leqslant |N(I') \cap (Y \setminus S_Y)|$ for every subset $I' \subseteq I$. Hence, assume for a contradiction that there exists $I'\subseteq I$ such that $|I'| > |N(I') \cap (Y \setminus S_Y)|$. As there is no edge in $B$ between $X\setminus S_X$ and $Y\setminus S_Y$ ($S$ is a vertex cover of $B$), the set $S'=(S \setminus I') \cup (N(I') \cap (Y \setminus S_Y))$ is a vertex cover of $B$ and $|S'|<|S|$: contradicting the minimality of $S$ (see Figure~\ref{fig:vcrti}). Thereby for every subset $I'\subseteq I$, we have $|I'| \leqslant |N(I') \cap (Y \setminus S_Y)|$. This concludes the proof.  
\end{proof}

\begin{figure}[h]
\begin{center}
\centerline{\includegraphics[scale=1]{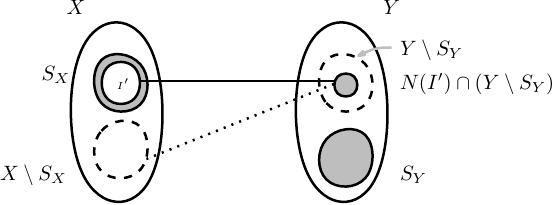}}
\caption{The case $|I'| > |N(I') \cap (Y \setminus S_Y)|$. The gray sets represent the vertex cover $S'$.\label{fig:vcrti}}
\end{center}
\end{figure}

\subsection{Reduction rules for feedback arc set in digraphs}

The following definitions and reduction rules were the core tool of the kernelization algorithm in~\cite{BFG+11}. The purpose of this section is to show how they can be easily implemented in the case of tournaments and bipartite tournaments.

A vertex of a digraph $D$ is {irrelevant} if it does not belong to any circuit of $D$. It is easy to observe that  optimal feedback arc sets of a digraph avoid  arcs incident to irrelevant vertices, which implies the following reduction rule.

\begin{polyrule} \label{rule:uselessvertexfast}
Remove every irrelevant vertex.
\end{polyrule}

A partition $\mathcal{P} = \{V_1, \dots, V_l\}$ of $V$ is an \emph{ordered partition} of an ordered digraph $D_\sigma$ if for every $i \in [l]$, $V_i$ is a set of consecutive vertices in $\sigma$. Hereafter an ordered partition of $D_\sigma$ will be denoted by $\mathcal{P}_\sigma=(V_1,\dots V_l)$, assuming that for every $x\in V_i$ and $y\in V_j$ with $1\leqslant i<j\leqslant l$, $x<_\sigma y$. The set of \emph{external arcs} of $\mathcal{P}_\sigma$ is $A_E(D_{\sigma},\mathcal{P}_\sigma)=\{uv\mid u\in V_i, v\in V_j, i\neq j\}$. A certificate $c(uv)$ of an external backward arc $uv$ is \emph{external} if the vertices of $c(uv)$ belong to distinct parts of $\mathcal{P}_\sigma$ (the arcs of the $v,u$-path are external). We say that a subset $B\subseteq A_E(D_{\sigma},\mathcal{P}_\sigma)$ of backward arcs is \emph{externally certified} if there exists a set $\{c(e)\mid e\in B\}$ of pairwise arc-disjoint external certificates.

\begin{definition}[Safe Partition] 
Let $D_\sigma = (V, A, \sigma)$ be an ordered digraph. An ordered partition $\mathcal{P}_\sigma = (V_1,\dots,V_l)$ of $D_\sigma$ is a \emph{safe partition} if the subset $B_E$ of backward arcs of $A_E(D_{\sigma},\mathcal{P}_\sigma)$ is non-empty and can be can be externally certified (see Figure~\ref{fig:spfasbt}).
\end{definition}
	
\begin{figure}[h]
\centerline{\includegraphics[scale=2]{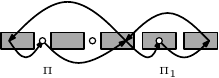}}
\caption{The two external backward arcs of the partition can be certified using the paths $\Pi$ and $\Pi_1$. In this example, there are no other external backward arcs. \label{fig:spfasbt}}
\end{figure} 
	
\begin{polyrule}[Safe partition~\cite{BFG+11}] \label{rule:safepartition}
Let $(D=(V,A),k)$ be an instance of \FAS{}, $\sigma$ be a vertex ordering of $V$ and $\mathcal{P}_\sigma = (V_1,\dots,V_l)$ be a safe partition of  $D_\sigma = (V, A, \sigma)$. Return $(D',k')$ where $D'$ is obtained from $D$ by reversing the external backward arcs $B_E\subseteq A_E(D_{\sigma},\mathcal{P}_\sigma)$ and where $k'=k-|B_E|$.
\end{polyrule}

\subsection{Tournaments}

A \emph{directed-$C_3$} (or circuit of size $3$) is a set of vertices $\{u,v,w\}$ such that $\{uv,vw,wu\}\subseteq A$. As shown by the following well-known statement, irrelevant vertices in the case of tournaments can be defined as the vertices not contained in any directed-$C_3$.

\begin{lemma}[Folklore]
\label{lem:folklore}
Let $T=(V,A)$ be a tournament. Then the following properties are equivalent: (i) $T$ is acyclic; (ii) $T$ is transitive; (iii) $T$ does not contain any directed-$C_3$.
\end{lemma}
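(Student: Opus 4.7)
The plan is to prove the equivalence by establishing the chain of implications (i) $\Rightarrow$ (iii) $\Rightarrow$ (ii) $\Rightarrow$ (i). The only ingredient beyond the basic definitions is the defining \emph{tournament property}: for any two distinct vertices $u,v$, exactly one of $uv$ or $vu$ belongs to $A$.

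First, (i) $\Rightarrow$ (iii) is immediate, since a directed triangle is itself a directed cycle of length three, so any tournament containing a directed triangle fails to be acyclic.

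For (iii) $\Rightarrow$ (ii), I would take any three distinct vertices $u,v,w$ with $uv \in A$ and $vw \in A$ and apply the tournament property to the pair $\{u,w\}$. Exactly one of $uw$ or $wu$ is an arc; if it were $wu$, then $\{uv,vw,wu\}$ would form a directed triangle, contradicting (iii). Hence $uw \in A$, which is precisely transitivity.

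Finally, for (ii) $\Rightarrow$ (i), I would argue by contradiction and choose a directed cycle $C = v_1 v_2 \dots v_\ell v_1$ of minimum length $\ell \ge 3$ in a transitive tournament $T$. Applying transitivity to the arcs $v_1 v_2$ and $v_2 v_3$ forces $v_1 v_3 \in A$. If $\ell \ge 4$, then replacing $v_1 v_2$ and $v_2 v_3$ by $v_1 v_3$ yields a directed cycle $v_1 v_3 v_4 \dots v_\ell v_1$ of length $\ell - 1$, contradicting minimality. If $\ell = 3$, then both $v_1 v_3$ and $v_3 v_1$ would belong to $A$, contradicting the tournament property. Either way a contradiction arises. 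Since every step reduces to the tournament property and one direct application of the hypothesis in use, there is no genuine obstacle; the only mild subtlety is splitting off the length-three case in the last implication, since it is there that shortening the cycle would collapse it entirely and must instead be ruled out by the tournament property itself.
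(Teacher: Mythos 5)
The paper marks this lemma as ``Folklore'' and gives no proof, so there is nothing to compare your argument against line by line. Your chain (i)~$\Rightarrow$~(iii)~$\Rightarrow$~(ii)~$\Rightarrow$~(i) is correct and is the standard argument: (i)~$\Rightarrow$~(iii) is trivial, (iii)~$\Rightarrow$~(ii) uses the tournament property on the pair $\{u,w\}$, and (ii)~$\Rightarrow$~(i) takes a minimum-length directed cycle, shortens it by transitivity when $\ell\geq 4$, and correctly handles $\ell=3$ separately by noting the shortening step would produce an arc antiparallel to one already in the cycle, which the tournament property forbids. No gaps.
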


Consequently, it will be possible to certify any backward arc $uv$ of an ordered tournament $T_{\sigma}=(V,A,\sigma)$ with a unique vertex $w\in span(uv)$ such that $\{u,v,w\}$ induces a directed-$C_3$. Clearly the cardinality of any set of arc-disjoint directed-$C_3$ provides a lower bound on the minimum feedback arc set of a tournament. We show how such a set can be used to compute in polynomial time a safe partition.

\newcommand{\T}{T_{\sigma}}
\newcommand{\V}{V_{\sigma}}

\begin{definition}[Conflict packing] \label{def:cpfast}
A \emph{conflict packing} of a tournament $T$ is a \emph{maximal} collection of arc-disjoint directed-$C_3$'s.
\end{definition}

Observe that a conflict packing can be computed in polynomial time, for instance by using a greedy algorithm. Given a conflict packing $\mathcal{C}$ of an instance $(T = (V,A),k)$ of \FAST{}, $V(\mathcal{C})$ denotes the set of vertices covered by $\mathcal{C}$. 

\begin{observation} \label{lem:cpfast}
Let $\mathcal{C}$ be a conflict packing of a tournament $T$. If $\mathbf{FAS}(T)\leqslant k$, then $|\mathcal{C}| \leqslant k$ and $|V(\mathcal{C})| \leqslant 3k$.
\end{observation}

\begin{lemma}[Conflict Packing] \label{lem:realcpfast}
Let $\mathcal{C}$ be a conflict packing of a tournament $T=(V,A)$. There exists a vertex ordering $\sigma$ of $T$ such that every backward arc $uv$ of $T_{\sigma}$ satisfies $u,v \in V(\mathcal{C})$.
\end{lemma}

\begin{proof}
Let us consider the vertex set $V'=V \setminus V(\mathcal{C})$. By maximality of $\mathcal{C}$, $T'=T[V']$ is acyclic. Observe also that for every $v\in V(\mathcal{C})$, $T[V'\cup\{v\}]$ is acyclic: otherwise $T$ would contain a directed-$C_3$ $\{v,u,w\}$ with $u,w\notin V(\mathcal{C})$ arc-disjoint from those of $\mathcal{C}$. Let $\tau$ be the unique topological ordering of $T'$. For every vertex $v\in V(\mathcal{C})$, there is a unique pair $u,w$ of  consecutive vertices in $\tau$ such that the insertion of $v$ between $u$ and $w$ yields the topological ordering of $T[V'\cup\{v\}]$. The pair $\{u,w\}$ is called the \emph{locus} of $v$. Consider any ordering $\sigma$ on $V$ such that: 
\begin{itemize}
\vspace{-0.2cm}
\item for $u,w\in V'$, if $u<_{\tau} w$, then $u\<w$, and 
\vspace{-0.2cm}
\item if $v\in V(\mathcal{C})$, and $\{u,w\}$ is the locus of $v$, then $u\< v\<w$. 
\end{itemize}
By construction of $\sigma$, every backward arc $uv$ of $T_{\sigma}$ satisfies $u,v \in V(\mathcal{C})$.
\end{proof}

\begin{theorem} \label{thm:fast}
The \FAST{} problem admits a kernel with at most $4k$ vertices.
\end{theorem}
\begin{proof}
We prove the result by showing that any instance $(T = (V,A), k)$ of \FAST{} such that $T$ is reduced under Rule~\ref{rule:uselessvertexfast} and $|V| > 4k$ is either a negative instance or can be reduced using Rule~\ref{rule:safepartition}. \\ 

Let $\mathcal{C}$ be a conflict packing of $T$ and $\sigma$ be the vertex ordering obtained through Lemma~\ref{lem:realcpfast}. 
Consider the bipartite graph $B = (I \cup V', E)$ where:
\begin{enumerate}[(i)]
\item $V' = V \setminus V(\mathcal{C})$, 
\vspace{-0.2cm}
\item there is a vertex $i_{vu}$ in $I$ for every backward arc $vu$ of $\T$ and, 
\vspace{-0.2cm}
\item $i_{vu}w \in E$ if $w \in V'$ and $\{u,v,w\}$ is a certificate of $vu$.
\end{enumerate}	 
	
Observe that any matching in $B$ of size at least $k+1$ corresponds to a conflict packing of size at least $k+1$, in which case the instance is negative (Observation~\ref{lem:cpfast}). Hence, by K\"onig-Egervary's theorem~\cite{BM76}, a minimum vertex cover $S$ of $B$ has size at most $k$. We denote $S_I=S\cap I$ and $S_{V'}=S\cap V'$.
Assume that $|V| > 4k$. Since $|S_{V'}| \leqslant k$ and  $|V(\mathcal{C})| \leqslant 3k$ (Lemma~\ref{lem:cpfast}), we have 
$V' \setminus S_{V'} \neq \emptyset$. Let $\mathcal{P}_\sigma=(V_1, \dots, V_l)$ be the ordered partition of $\T$ such that every part $V_i$ consists of either a vertex of $V' \setminus S_{V'}$ or a maximal subset of consecutive vertices (in $\sigma$) of $V\setminus (V' \setminus S_{V'})$ (see Figure~\ref{fig:spfast}).

\begin{figure}[h]
\centerline{\includegraphics[scale=1]{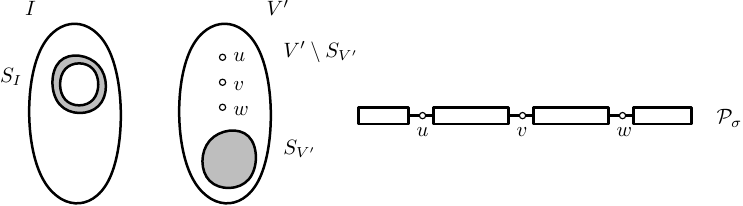}}
\caption{The construction of the ordered partition $\mathcal{P}_\sigma$ based on the bipartite graph $B$. The vertices of $V' \setminus S_{V'} = \{u,v,w\}$ are alone in their parts while the other vertices are grouped in maximal subsets of consecutive vertices (w.r.t. $\sigma)$.  \label{fig:spfast}}
\end{figure}

\begin{claim} \label{claim:pissafe}
$\mathcal{P}_\sigma$ is a safe partition of $\T$.
\end{claim}
\begin{proofclaim}
Let $w$ be a vertex of $V'\setminus S_{V'}$. By Lemma~\ref{lem:realcpfast}, $w$ is not incident to any backward arc.
Since $T$ is reduced under Rule~\ref{rule:uselessvertexfast}, there must exist a backward arc $f$ such that $w\in span(f)$. It follows that the set $A_E(T_{\sigma},\mathcal{P}_\sigma)$ of external arc contains at least one backward arc. 
Let $e=vu\in A_E(T_{\sigma},\mathcal{P}_\sigma)$ be a backward arc of $\sigma$. By construction of $\mathcal{P}_\sigma$, there exists a vertex $w_e \in (V' \setminus S_{V'})\cap span(e)$. Then $\{u,v,w_e\}$ is a certificate of $e$ and $i_{vu}w_e$ is an edge of $B$. Observe that since $S$ is a vertex cover and $w_e \notin S_{V'}$, the vertex $i_{vu}$ belongs to $S_I$ (otherwise the edge $i_{vu}w_e$ would not be covered). Thereby the subset $I'\subseteq I$ corresponding to the backward arcs of $A_E$ is included in $S_I$. 
By Lemma~\ref{lem:matching}, we know that $I'$ can be matched into $V' \setminus S_{V'}$ in $B$. 
Since every vertex of $V' \setminus S_{V'}$ is a singleton in $\mathcal{P}_\sigma$, the existence of the matching shows that the backward arcs of $A_E$ can be certified using arcs of $A_E$ only, and hence $\mathcal{P}_\sigma$ is safe. 
\end{proofclaim}

This concludes the proof. 
\end{proof}
 
\subsection{Bipartite tournaments}

A  \emph{directed-$C_4$} (or circuit of length $4$) is a subset $\{w,x,y,z\}$ of vertices such that $\{wx, xy, yz, zw\}\subseteq A$. The following statement shows that directed-$C_4$'s will play the same role as directed-$C_3$'s in tournaments. 

\begin{lemma}\cite{DGHNT10}
A bipartite tournament is acyclic if and only if it does not contain a directed-$C_4$.
\end{lemma}

\begin{lemma} \cite{MRRS13}
If a vertex of a bipartite tournament belongs to a circuit, then it belongs to a directed-$C_4$.
\end{lemma}

It follows that in the case of bipartite tournaments, an irrelevant vertex is a vertex not belonging to any directed-$C_4$. Let $T_\sigma = (X\uplus Y,A,\sigma)$ be an ordered bipartite tournament. A \emph{certificate} for a backward arc $e = wz$ of $T_\sigma$ (i.e. $z<_{\sigma} w$) is a set $c_B(e)=\{z,u,v,w\}$ such that $zu$, $uv$ and $vw$ are forward arcs. In other words, $c_B(e)$ is a directed-$C_4$ for which $e$ is the unique backward arc. 

\begin{definition} \label{def:cpfasbt}
A \emph{conflict packing} $\mathcal{C}$ of a bipartite tournament is a \emph{maximal} collection of \emph{arc-disjoint} directed-$C_4$'s.
\end{definition}

Remark that two directed-$C_4$'s of $\mathcal{C}$ may intersect on two vertices, provided that these vertices both belong to  either $X$ or $Y$. Here again, such a collection can be computed in polynomial time (using for instance a greedy algorithm). 

\begin{observation} \label{prop:cpfasbt}
Let $\mathcal{C}$ be a conflict packing of a tournament $T=(X\uplus Y, A)$. If $\mathbf{FAS}(T)\leqslant k$, then $|\mathcal{C}| \leqslant k$ and $|V(\mathcal{C})| \leqslant 4k$.
\end{observation}

The kernel in the case of bipartite tournaments requires a third reduction rule based on modules. A \emph{module} in a digraph $D=(V,A)$ is a set of vertices $M$ such that for every $u,v\in V$, $N^+(u)\setminus M=N^+(v)\setminus M$ and $N^-(u)\setminus M=N^-(v)\setminus M$. A module $M$ is \emph{non-trivial} if $|M| > 1$ and $M \neq V$. A module $M$ is \emph{maximal} if it is not included in any non-trivial module $M'$. It is easy to observe that in a bipartite tournament $T=(X\uplus Y,A)$, any non-trivial module $M$ is either a subset of $X$ or of $Y$ (and thereby an independent set). 

\begin{polyrule}[\cite{MRRS13}] \label{rule:modulesfasbt}
Let $M$ be a non-trivial module of $T$ of size at least $k + 1$. Remove all but $k + 1$ arbitrary vertices from $M$. 
\end{polyrule}

The correctness of this rule is a consequence of the following observation. Consider an instance without any irrelevant vertex. Then every vertex belongs to a $C_4$ and whenever a module $M$ of size at least $k+1$ exists, there exists a set $\{\{t,u,v,w\}\mid t\in M \}$ of at least $k+1$ directed-$C_4$'s pairwise intersecting on the arcs $uv$ and $vw$. Thereby any solution of size at most $k$ has to reverse one of those two arcs.

To compute a safe partition, we follow the main two steps described in the case of \FAST{}: 1) based on a conflict packing, compute a permutation $\sigma$ of $X\uplus Y$ such that the vertices incident to the backward arcs are covered by the conflict packing; 2) compute from $\sigma$ a safe partition of $X\uplus Y$. Let us show how to compute the vertex ordering $\sigma$.

\begin{lemma} \label{lem:module-ordering}
Let $T=(X\uplus Y,A)$ be an acyclic bipartite tournament. There exists a partial order $\tau$ on the vertices such that every topological ordering $\sigma$ of $T$ is a linear extension of $\tau$.
\end{lemma}
\begin{proof}
We define $\tau$ as follows: $x<_{\tau} y$ if and only if $xy\in A$ or there exists $z$ such that $xz\in A$ and $zy\in A$. It is easy to observe that two vertices are incomparable with respect to $\tau$ if and only if they belong to the same module. Let us prove that if $\sigma$ is a topological ordering of $T$, then the set of vertices of any maximal module $M$ of $T$ occurs consecutively in $\sigma$, implying the result.

Assume by contradiction that there exist $x,y\in M$ and $v\notin M$ such that $x<_{\sigma}v<_{\sigma} y$. W.l.o.g. assume that $M\subseteq X$. First consider that $v\in Y$. If  $v\in N^+(x)$ (the case  $v\in N^-(x)$ is similar) then $v\in N^+(y)$, in other words the arc $yv$ is backward in $\sigma$: contradicting the fact that $\sigma$ is a topological ordering. Assume that $v\in X$. As $v\notin M$, there exists a vertex $s\notin M$ distinguishing $v$ from $x,y$: either $s\in N^+(v)$, $s\in N^-(x)$ and $s\in N^-(y)$ or $s\in N^-(v)$, $s\in N^+(x)$ and $s\in N^+(y)$. In any case, one of the three arcs between $s$ and $x, u$ or $y$ is backward in $\sigma$: contradicting the fact that $\sigma$ is a topological ordering.
%
\end{proof}

\begin{lemma} \label{lem:cpfasbt}
Let $\mathcal{C}$ be a conflict packing of a bipartite tournament $T = (X\uplus Y,A)$. There exists a vertex ordering $\sigma$ of $T$ such that every backward arc $uv$ of $T_\sigma$ (i.e. $v<_{\sigma} u$) satisfies $u,v \in V(\mathcal{C})$. 
\end{lemma}
\begin{proof}
Let $V'$ be the set of vertices not covered by $\mathcal{C}$, i.e. $V'=(X\uplus Y)\setminus V(\mathcal{C})$. By maximality of $\mathcal{C}$, $T'=T[V']$ is acyclic. Let $\tau$ be the partial order on $V'$ defined by Lemma~\ref{lem:module-ordering}. We define a vertex ordering $\sigma$ by extending $\tau$ (adding comparabilities) and inserting vertices of $V(\mathcal{C})$.
Let $x,y$ be two arbitrary vertices of $T$:

\begin{enumerate}
\item $x\in V(\mathcal{C})$ and $y\in V(\mathcal{C})$:
\begin{enumerate}
\item $x\in X$ and $y\in X$: if $N^-(x)\cap V'=N^-(y)\cap V'$, then $x$ and $y$ are incomparable, otherwise if $N^-(x)\subset N^-(y)$, then $x<_{\sigma} y$; 
\item  $x\in X$ and $y\in Y$: if  $\exists v\in N^-(y)\cap V'$ such that $\forall u\in N^-(x)\cap V'$, $u<_{\tau} v$, then $x<_{\sigma} y$, otherwise $y<_{\sigma} x$;
\end{enumerate}

\item If $x\in V'$ and $y\in V(\mathcal{C})$: 
\begin{enumerate}
\item $x\in X$ and $y\in X$: 
if $\exists u\in N^-(y)\cap V'$ such that $x<_{\tau} u$, then $x<_{\sigma} y$, otherwise $y<_{\sigma} x$;
\item  $x\in X$ and $y\in Y$: if $x\in N^-(y)\cap V'$, then $x<_{\sigma} y$, otherwise $y<_{\sigma} x$;
\end{enumerate}

\item If $x\in V'$ and $y\in V'$: if $x$ and $y$ are not comparable in $\tau$ and $\exists u\in V(\mathcal{C})$ with $x\in N^-(u)$ and $y\in N^+(u)$, then $x<_{\sigma} y$;
\end{enumerate}

The following claims will help to prove that $\sigma$ is a partial order.
 
\begin{claim} \label{cl:comp}
Let $x$ and $y$ be two vertices of $V(\mathcal{C})\cap X$ (resp. $V(\mathcal{C})\cap Y$), then $(N^+(x)\cap V')\subseteq (N^+(y)\cap V')$ or vice versa.
\end{claim}
\begin{proofclaim}
Assume there exist $u\in (N^+(x)\cap V')\setminus (N^+(y)\cap V')$ and $v\in (N^+(x)\cap V')\setminus (N^+(y)\cap V')$. Then $\{x, u, y, v\}$ is a directed-$C_4$ which is arc-disjoint from every directed-$C_4$ of $\mathcal{C}$, contradicting the maximality of $\mathcal{C}$.
\end{proofclaim}

\begin{claim} \label{cl:consec}
Let $y$ be a vertex of $V(\mathcal{C})$. If $x\in V'$ and $z\in V'$ are two comparable vertices in $\tau$ such that $x\in N^-(y)$ and $z\in N^+(y)$, then $x<_{\tau} z$.
\end{claim}
\begin{proofclaim}
Assume that $z<_{\tau} x$. As $x$ and $z$ do not belong to the same module of $T'$ there exists $u\in V'$ with $z<_{\tau} u<_{\tau} x$ such that $zu$ and $ux$ are arcs of $T$. It follows that the vertices $zuxy$ form a directed-$C_4$ arc-disjoint from those of $\mathcal{C}$: contradicting the maximality of $\mathcal{C}$.
\end{proofclaim}

\begin{claim} \label{cl:transitive}
Let $x,z$ be two vertices such that $x\in V'\cap X$ and $z\in V(\mathcal{C})\cap Y$ (resp. $x\in V'\cap Y$ and $z\in V(\mathcal{C})\cap X$). If there exists $y\in V'$ such that $x<_{\tau} y$ and $y<_{\sigma} z$, then $x\in N^-(z)$ and $x<_{\sigma} z$.
\end{claim}
\begin{proofclaim}
We consider the case where $x \in V' \cap X$ and $z \in V(\mathcal{C}) \cap Y$, the other case being symmetric.
By condition 2 of definition of $\sigma$, $y<_{\sigma} z$ implies that if $y\in X$, then $y\in N^-(z)$ otherwise there exists $u\in N^-(z)\cap V'$ such that $y<_{\tau} u$. Assume that $y\in X$. As $x<_{\tau} y$, $x$ and $y$ do not belong to the same maximal module. So there exists $v\in Y\cap V'$ such that $x<_{\tau} v<_{\tau} y$. Observe that if $x\in N^+(z)$, then the vertices $\{x,v,y,z\}$ form a directed-$C_4$. Likewise if $y\in Y$, the vertices $\{x,y,u,z\}$ form a directed-$C_4$. In both cases we obtain a directed-$C_4$ arc-disjoint from those of $\mathcal{C}$: contradicting the maximality of $\mathcal{C}$. It follows that $x\in N^-(z)$ and thereby $x<_{\sigma} z$ (condition 2b of definition of $\sigma$).
\end{proofclaim}

\begin{claim} \label{cl:PO}
$\sigma$ is a partial order. 
\end{claim}
\begin{proofclaim}
Let us first observe that $\sigma$ is antisymmetric. If $x$ or $y$ belong to $V(\mathcal{C})$, antisymmetry is straightforward from definition of $\sigma$. Suppose that $x$ and $y$ both belong to $V'$. As new comparabilities are added only between pair of vertices non-comparable in $\tau$, antisymmetry follows from Claim~\ref{cl:comp}. 

Let us prove that $\sigma$ is transitive with a case by case analysis.  Keep in mind that two vertices are incomparable with respect to $\tau$ if and only if they belong to the same maximal module of $T$. Let $x, y, z$ be three vertices such that $x<_{\sigma} y$ and $y<_{\sigma} z$.
\begin{itemize}
\item $x,y,z\in V'$: If $x<_{\tau} y$ and $y<_{\tau} z$, as $\tau$ is transitive, we also have $x<_{\tau} z$ and thereby $x<_{\sigma} z$. Suppose that $x, y$ belong to the same maximal module of $T'$ and $y<_{\tau} z$, then by construction of $\tau$, we also have $x<_{\tau} z$ implying $x<_{\sigma} z$. The same argument holds if  $x<_{\tau} y$ and $y,z$ belong to the same maximal module of $T'$. The last case to consider is when $x,y,z$ belong to the same maximal module of $T'$. Then $x<_{\sigma} y$ implies the existence of $u\in V(\mathcal{C})$ such that $x\in N^-(u)$ and $y\in N^+(u)$ (condition 3 of definition of $\sigma$). Likewise, $y<_{\sigma} z$ implies the existence of $v\in V(\mathcal{C})$ such that $y\in N^-(v)$ and $z\in N^+(v)$. As $y\in N^-(v)\setminus N^-(u)$, Claim~\ref{cl:comp} implies that $N^-(u)\cap V'\subset N^-(v)\cap V'$ and thereby $x\in N^-(v)$. As $z\in N^+(v)$, it follows that $x<_{\sigma} z$ (condition 3 of definition of $\sigma$).

\item $x,y \in V'$ and $z\in V(\mathcal{C})$ (or $x\in V(\mathcal{C})$ and $y,z\in V'$): Assume that $x$ and $y$ belong to the same maximal module of $T'$. 
It follows that $x$ and $y$ both belong either to $X$ or to $Y$. Assume w.l.o.g. that $x,y\in X$.
If $z\in X$, then $y<_{\sigma} z$ implies the existence of $u\in N^-(z)\cap V'$ such that $y<_{\tau} u$ (condition 2a of definition of $\sigma$). As $x$ and $y$ belongs to the same module of $T'$, $y<_{\tau} u$ implies that $x<_{\tau} u$ and thereby $x<_{\sigma} z$. Assume that $z\in Y$. Then $y<_{\sigma} z$ implies $y\in N^-(z)$ (condition 2b of definition of $\sigma$). Moreover, $x <_\sigma y$ implies the existence of $u \in V(\mathcal{C})$ such that $x \in N^-(u)$ and $y \in N^+(u)$ (condition 3 of definition of $\sigma$). Observe that as $y\in N^+(u)\cap N^-(z)$, $u\neq z$. It follows that $x\in N^-(z)$ as well, 
since otherwise $\{x,u,y,z\}$ would form a directed-$C_4$ arc-disjoint from those of $\mathcal{C}$, contradicting the maximality of $\mathcal{C}$. Thereby we have $x<_{\sigma} z$.

If $x\in X$ and $z\in Y$, as $x<_{\tau} y$ and $y<_{\sigma} z$, by Claim~\ref{cl:transitive} we have $x<_{\sigma} z$. Suppose that $x,z\in X$. If $y\in X$, then $y<_{\sigma} z$ implies the existence of $u\in N^-(z)\cap V'$ such that $y<_{\tau} u$ (condition 2a). It follows that $x<_{\tau} u$. If $y\in Y$, then $y<_{\sigma} z$ implies that $y\in N^-(z)\cap V'$ (condition 2b). In both cases, there exists a vertex $v$ such that $v\in N^-(z)\cap V'$ and $x<_{\tau} v$. It follows that $x<_{\sigma} z$ (condition 2a). The other cases are symmetric. 

\item $x,z\in V'$ and $y\in V(\mathcal{C})$: 
We first consider the case where $x$ and $z$ do not belong to the same module of $T'$, implying that either $x<_\tau z$ or $z<_\tau x$. We prove that $z<_\tau x$ would contradict the maximality of $\mathcal{C}$. Suppose that $x\in X$, $z\in Y$ and that $y\in X$ (the case $y\in Y$ is symmetric). As $x<_\sigma y$, there exists $u\in N^-(y)\cap V'$ such that $x<_\tau u$ (condition 2a), or equivalently $x\in N^-(u)$. It follows that if $z<_\tau x$ or equivalently $z\in N^-(x)$, we obtain a directed-$C_4$ on $\{x,y,z,u\}$ arc-disjoint from those of $\mathcal{C}$. Suppose that $x\in X$ and $z\in X$. If $z<_\tau x$, then there exists $u\in Y\cap V'$ such that $u\in N^+(z)\cap N^-(x)$. If $y\in Y$, then $x<_\sigma y$ and $y<_{\sigma} z$ implies $y\in N^-(z)\cap N^+(x)$ (condition 2b). So $\{x,y,z,u\}$ forms a directed-$C_4$ arc-disjoint from those of $\mathcal{C}$. If $y\in X$, then there exists $v\in N^-(y)\cap V'$ such that $x<_\tau v$ (condition 2a). Observe that $u\in N^-(y)$ would contradict $y<_\sigma z$ (condition 2b). Thereby $\{x,y,u,v\}$ forms a directed-$C_4$ arc-disjoint from those of $\mathcal{C}$. So as announced, we prove that if $x$ and $z$ do not belong to the same module,  $x<_\tau z$.

Let us now assume that $x, z$ belong to the same maximal module $M$ of $T'$. So w.l.o.g. assume that $x,z\in X$. Suppose that $y\in X$. Then $x<_{\sigma} y$ implies the existence of $u\in N^-(y)\cap V$ with $x<_{\tau} u$ (condition 2a). Observe that $xu$ is an arc of $T$ ($u\in Y$ and $x<_{\tau} u$) and thereby $zu$ is an arc of $T$ as well ($x,z$ belong to the same maximal module not including $u$). It follows that $z<_{\sigma} y$ (condition 2a): contradiction. So we have $y\in Y$. Observe then that $x<_{\sigma} y$ and $y<_{\sigma} z$ implies that $x\in N^-(y)$ and $z\in N^+(y)$. Hence we obtain $x<_{\sigma} z$ (condition 3).

\item $x,y \in V(\mathcal{C})$ and $z\in V'$: If $x\in X$ and $z\in Y$, we prove that $z\in N^+(x)$ implying that $x<_{\sigma} z$ (condition 2b). Assume that $y\in X$. As $x<_{\sigma} y$ implies $N^-(x)\cap V' \subset N^-(y)\cap V'$ (condition 1a) and as $y<_{\sigma} z$ implies $z\in N^+(y)$ (condition 2b), we have $z\in N^+(x)$. Assume that $y\in Y$. Then $x<_{\sigma} y$ implies the existence of $v\in N^-(y)\cap V'$ such that for every $u\in N^-(x)\cap V'$, $u<_{\tau} v$ (condition 1b). Moreover $y<_{\sigma} z$ implies that for every $w\in N^-(y)\cap V'$, $w<_{\tau} z$ (condition 2a). It follows that $v<_{\tau} z$. To conclude, observe now that $z\in N^-(x)$ would contradict $v<_\tau z$ (condition 1b). 

If $x,z\in Y$, we prove that every $u\in N^-(x)\cap V'$ satisfies $u<_{\tau} z$, implying that $x<_{\sigma} z$ (condition 2a). Assume that $y\in X$. Then $x <_{\sigma} y$ implies the existence of $v\in N^-(y)\cap V'$ such that for every $u\in N^-(x)\cap V'$, $u<_{\tau} v$. Moreover $y<_{\sigma} z$ implies that $y\in N^-(z)$ (condition 2b). By Claim~\ref{cl:consec}, as $v\in N^-(y)$ and $z\in N^+(y)$, we have $v<_{\tau} z$. Thereby every $u\in N^-(x)\cap V'$ satisfies $u<_{\tau} z$. Assume that $y\in Y$. Since $x<_{\sigma} y$ implies that $N^-(x)\cap V'\subset N^-(y)\cap V'$ (condition 1a) and since $y<_{\sigma} z$ implies that every $v\in N^-(y)\cap V'$ satisfies $v<_{\tau} z$ (condition 2a), we have that every $u\in N^-(x)\cap V'$ satisfies $u<_{\tau} z$.

\item $x,z\in V(\mathcal{C})$ and $y\in V'$: If  $x\in X$ and $z\in Y$, we prove the existence of a vertex $v\in N^-(z)\cap V'$ such that for every $u\in N^-(x)\cap V'$, $u<_{\tau} v$, implying that $x<_{\sigma} z$ (condition 1b). This is the case if $y\in X$, as $x<_{\sigma} y$ implies that every $u\in N^-(x)\cap V'$ satisfies $u<_{\tau} y$ (condition 2a) and as $y<_{\sigma} z$ implies $y\in N^-(z)$ (condition 2b). So assume that $y\in Y$. Then $x<_{\sigma} y$ implies $y\in N^+(x)$ (condition 2b). Moreover $y<_{\sigma} z$ implies the existence of $v\in N^-(z)\cap V'$ such that $y<_{\tau} v$ (condition 2a). It follows by Claim~\ref{cl:consec} that every $u\in N^-(x)\cap V$ satisfies $u<_{\tau} v$.

If  $x,z\in X$, we prove that $N^-(x)\cap V'\subset N^-(z)\cap V'$, implying that $x<_{\sigma} z$ (condition 1a). Assume that $y\in X$. Then  $x<_{\sigma} y$ implies that every $u\in N^-(x)\cap V'$ satisfies $u<_{\tau} y$ (condition 2a). Moreover $y<_{\sigma} z$ implies the existence of $v\in N^-(z)\cap V'$ such that $y<_{\tau} v$ (condition 2a). It follows that $v\in (N^-(z)\setminus N^-(x))\cap V'$. Assume that $y\in Y$. As $x<_{\sigma} y$ implies $y\in N^+(x)\cap V'$ (condition 2b) and as $y<_{\tau} z$ implies $y\in N^-(z)\cap V'$ (condition 2b), we have $y\in N^-(z)\setminus N^-(x))\cap V'$. In both cases, Claim~\ref{cl:comp} implies that $N^-(x)\cap V'\subset N^-(z)\cap V'$.

\item $x\in V'$ and $y,z\in V(\mathcal{C})$: If  $x\in X$ and $z\in Y$, we prove that $x\in N^-(z)$ implying that $x<_{\sigma} z$ (condition 2b). Assume that $y\in X$. Observe that $x<_{\sigma} y$ implies the existence of $u\in N^-(y)\cap V'$ such that $x<_{\tau} u$ (condition 2a). Moreover $y<_{\sigma} z$ implies the existence of $v\in N^-(z)\cap V'$ such that every $w\in N^-(y)\cap V$ satisfies $w<_{\tau} v$ (condition 1b). It follows that $x<_{\tau} v$ and by Claim~\ref{cl:transitive}, $x\in N^-(z)$. Assume that $y\in Y$. As $x<_{\sigma} y$ implies $x\in N^-(y)$ (condition 2b) and as $y<_{\sigma} z$ implies $N^-(y)\cap V'\subset N^-(z)\cap V'$ (condition 1a), we have that $x\in N^-(z)$.

If $x,z\in X$, we prove the existence of $u\in N^-(z)\cap V'$ such that $x<_{\tau} u$, implying that $x<_{\sigma} z$ (condition 2a). Assume that $y\in X$. Observe that $x<_{\sigma} y$ implies the existence of $u\in N^-(y)\cap V'$ such that $x<_{\tau} u$ (condition 2a). Moreover $y<_{\sigma} z$ implies $N^-(y)\cap V'\subset N^-(z)\cap V'$ (condition 1a). It follows that $u\in N^-(z)\cap V'$ and $x<_{\tau} u$. Assume that $y\in Y$. Then $x<_{\sigma} y$ implies $x\in N^-(y)$ (condition 2b). Moreover $y<_{\sigma} z$ implies the existence of $u\in N^-(z)$ such that every $v\in N^-(y)\cap V'$ satisfies $v<_{\tau} u$ (condition 1a). It follows that  $x<_{\tau} u$.

\item $x,y,z\in V(\mathcal{C})$: If  $x\in X$ and $z\in Y$, we prove the existence of $v\in N^-(z)\cap V'$ such that every $u\in N^-(x)\cap V'$ satisfies $u<_{\tau} v$, implying that $x<_{\sigma} z$ (condition 1b). Assume that $y\in X$. Then $x<_{\sigma} y$ implies $N^-(x)\cap V'\subset N^-(y)\cap V'$ (condition 1a). Moreover $y<_{\sigma} z$ implies the existence of $v\in N^-(z)\cap V'$ such that every $w\in N^-(y)\cap V'$ satisfies $w<_{\tau} v$ (condition 1b). It follows that for every $u\in N^-(x)\cap V'$ we have $u<_{\tau} v$. Assume that $y\in Y$. Then $x<_{\sigma} y$ implies the existence of $v\in N^-(y)\cap V'$ such that every $u\in N^-(x)\cap V'$ satisfies $u<_{\tau} v$ (condition 1b). To conclude it suffices to observe that as $y<_{\sigma} z$ implies $N^-(y)\cap V'\subset N^-(z)\cap V'$ (condition 1a), $v\in N^-(z)\cap V'$.

If $x,z\in X$, we prove that $N^-(x)\cap V'\subset N^-(z)\cap V'$ implying that $x<_{\sigma} z$ (condition 1a). Assume that $y\in X$. As $x<_{\sigma} y$ implies $N^-(x)\cap V'\subset N^-(y)\cap V'$ (condition 1a) and  as $y<_{\sigma} z$ implies  $N^-(y)\cap V'\subset N^-(z)\cap V'$ (condition 1a), we have that $N^-(x)\cap V'\subset N^-(z)\cap V'$. Assume that $y\in Y$. Then $x<_{\sigma} y$ implies the existence of $v\in N^-(y)\cap V'$ such that every $u\in N^-(x)\cap V'$ satisfies $u<_{\tau} v$ (condition 1b). Moreover $y<_{\sigma} z$ implies the existence of $w\in N^-(z)\cap V'$ such that every $v\in N^-(y)\cap V'$ satisfies $v<_{\tau} w$ (condition 1b). It follows that $w\in N^-(z)\setminus N^-(x)$ and thereby Claim~\ref{cl:comp} shows that $N^-(x)\cap V'\subset N^-(z)\cap V'$.
\end{itemize} \end{proofclaim}

\begin{claim} 
If $xy$ is an arc of $T$ such that $y<_{\sigma} x$, then $x$ and $y$ both belong to $V(\mathcal{C})$.
\end{claim}
\begin{proofclaim}
Let $xy$ be an arc of $T$. By construction, $\sigma$ restricted to $V'$ is an extension of $\tau$. 
It follows that if $x$ and $y$ are two vertices of $V'$, then $x<_{\sigma} y$. If $x\in V'$ and $y\in V(\mathcal{C})$, then by construction of $\sigma$ (condition 2b of definition of $\sigma$), $x\in N^-(y)$ implies $x<_{\sigma} y$. The case $x\in V(\mathcal{C})$ and $y\in V'$ is symmetric.
\end{proofclaim}

It follows from the above claims that any linear extension of the partial order $\sigma$ (Claim~\ref{cl:PO}) is a vertex ordering satisfying the statement.
\end{proof} 	


\begin{theorem} \label{thm:kernelfasbt}
The \FASBT{} problem admits a kernel with $O(k^2)$ vertices. 
\end{theorem}

\begin{proof}

We prove the result by showing that any instance $(T = (X\uplus Y,A), k)$ of \FASBT{} such that $T$ is reduced under Rule~\ref{rule:uselessvertexfast} and Rule~\ref{rule:modulesfasbt} and $|X\uplus Y| \geqslant 11k^2+16k+1$ is either a negative instance or can be reduced using Rule~\ref{rule:safepartition}. \\ 

Let $\mathcal{C}$ be a conflict packing of $T$ and $\sigma$ be an ordering of $X\uplus Y$ satisfying the conditions of Lemma~\ref{lem:cpfasbt} (notice that each can be computed in polynomial time). 
We will denote by $V'$ the set $(X\cup Y) \setminus V(\mathcal{C})$. A directed-$P_3$ is a triple of vertices $\{x,y,z\}$ such that $xy\in A$ and $yz\in A$. A directed $P_3$ is \emph{free} if $\{x,y,z\}\subseteq V'$ and it does not exist $u\in V(\mathcal{C})$ such that $x<_{\sigma} u<_{\sigma} z$ (observe that by construction of $\sigma$ we have $x<_{\sigma} y<_{\sigma} z$). If $\{x,y,z\}$ is a free directed-$P_3$, then there exist $u,v\in V(\mathcal{C})$ such that $uv$ is a backward arc of $T_{\sigma}$ and $\{x,y,z\}\subseteq span(uv)$, that is $v<_{\sigma} x<_{\sigma} y<_{\sigma} z<_{\sigma} u$ (otherwise, $T$ would not be reduced under Rule~\ref{rule:uselessvertexfast}). Observe that independent of whether $u\in X, v\in Y$ or $u\in Y$, $v\in X$, the subset of vertices $\{v,x,y,z,u\}$ contains a certificate of $uv$. Indeed suppose without loss of generality that $u\in X$ and $v\in Y$.  If $x\in X$, then $\{u,v,x,y\}$ is a directed-$C_4$, otherwise $\{u,v,y,z\}$ is a directed-$C_4$.

We construct a bipartite graph $B=(I\cup P,E)$ as follows: 1) $I$ contains a vertex $i_{uv}$ for every backward arc $uv$ of $T_{\sigma}$; 2) $P$ contains a vertex $p_{xyz}$ for every free directed-$P_3$ of $T_{\sigma}$; and 3) $(i_{uv},p_{xyz})\in E$ if and only if $\{x,y,z\}\subseteq span(uv)$. Clearly $B$ can be built in polynomial time. It is easy to observe that any matching in $B$ witnesses an arc-disjoint set of directed-$C_4$'s. It follows that if the size $\mu$ of a maximum matching in $B$ is $k+1$ or more, then $T$ is a negative instance. Let us now explain how to compute a safe partition when $\mu\leqslant k$. We first show that as $|X\uplus Y| \geqslant 11k^2+16k+1$, we can extract from $T_{\sigma}$ a set $\mathcal{Q}$ of at least $k+1$ vertex disjoint free directed-$P_3$'s. Let us call \emph{interval of $\sigma$} a maximal subset $I\subseteq V'$
of vertices appearing consecutively in $\sigma$. 

\begin{claim} \label{cl:module-interval}
If $M$ is a module of $T$, then $M\setminus V(\mathcal{C})$ is contained in a unique interval of $V'$.
\end{claim}
\begin{proofclaim}
First observe that $M'=M\setminus V(\mathcal{C})$ is a module of $T[V']$. To compute $\sigma$, we started from a topological ordering $\tau$ of $T[V']$ and by Lemma~\ref{lem:module-ordering}, vertices of $M'$ are consecutive in $\tau$.
Assume that vertices of $M'$ are not consecutive in $\sigma$. Then by construction of $\sigma$, $V(\mathcal{C})$ contains a vertex $u$ such that $N^-(u)\cap M'\neq\emptyset$ and $N^+(u)\cap M'\neq\emptyset$. Now to be a module of $T$, $M$ has to contain vertex $u$: contradicting the fact that every module of a bipartite tournament is an independent set.
\end{proofclaim}

As by Observation~\ref{prop:cpfasbt}, $|V(\mathcal{C})|\leqslant 4k$, the set $V'$ has size at least $11k^2+12k+1$ and is partitioned in at most $4k-1$ intervals. Let $m$ be the number of maximal modules of $T$ intersecting $V'$. As $T$ is reduced under Rule~\ref{rule:modulesfasbt}, every maximal module has size at most $k+1$ and thus $m\geqslant 11k+1$. By Claim~\ref{cl:module-interval}, $m=\sum_{I\in\mathcal{I}} m_I$ where $\mathcal{I}$ is the set of intervals and $m_I$ is the number of maximal modules of $T$ intersecting $I$. For every interval $I$, define $r_I=m_I\mod 3$ and $k_I=(m_I-r_I)/3$. Observe that $\sum_{I\in\mathcal{I}} r_I\leqslant 8k+2$ implying that $\sum_{I\in\mathcal{I}} k_I\geqslant k+1$. In other words, it is possible to extract from the maximal modules of $T$ intersecting $V'$ a set $\mathcal{Q}$ of at least $k+1$ vertex disjoint free directed-$P_3$'s, each directed-$P_3$ being composed by three vertices from distinct consecutive modules.

Recall that the bipartite graph has a maximum matching of size $\mu\leqslant k$ and that $|P|=|\mathcal{Q}|\geqslant k+1$. By K\"onig-Egervary's theorem~\cite{BM76}, $B$ has a vertex cover $S$ of size $\mu$. Let $S_I=S\cap I$ and $S_P=S\cap P$. It follows that $P\setminus S_P\neq\emptyset$. Let $\mathcal{P}_\sigma$ be the ordered partition defined as follows: for every vertex $p_{xyz}\in  P\setminus S_P$, $\{x\}$, $\{y\}$ and $\{z\}$ form a singleton parts of $\mathcal{P}_\sigma$; the other parts are the maximal set of consecutive vertices not covered by the directed-$P_3$'s of $S_P\setminus P$. 

\begin{claim} \label{cl:safe-partition}
The ordered partition $\mathcal{P}_\sigma$ is a safe partition.
\end{claim}
\begin{proofclaim}
Let us first argue that the set $A_E(T_\sigma,\mathcal{P}_\sigma)$ of external arcs contains at least one backward arc. As $P\setminus S_P\neq\emptyset$, the ordered partition $\mathcal{P}_\sigma$ is non-trivial. Let $\{x,y,z\}$ be a directed-$P_3$ such that $p_{xyz}\in P\setminus S_P$. By construction, $x$, $y$ and $z$ are not in $V(\mathcal{C})$ and thereby not incident to any backward arc. As $T$ is reduced by Rule~\ref{rule:uselessvertexfast}, $z$ belongs to a directed-$C_4$ and since the directed-$P_3$ $\{x,y,z\}$ is free, there exists a backward arc $uv$ such that $v<_{\sigma} x<_{\sigma} y<_{\sigma} z<_{\sigma} u$: proving that  $A_E(T_\sigma,\mathcal{P}_\sigma)$ contains at least one backward arc.

Let us now consider $uv$ an external backward arc of $\mathcal{P}_\sigma$. By construction of $\mathcal{P}_\sigma$, there exists $p_{xyz}\in P\setminus S_P$ such that $x,y,z\in span(uv)$. Thereby $(i_{uv},p_{xyz})$ is an edge of $B$ and $\{u,x,y,z,v\}$ contains a certificate $c_I(uv)$ (either $\{u,v,x,y\}$ or $\{u,v,y,z\}$ induces a directed-$C_4$). Thereby as $S$ is a vertex cover of $B$ and $p_{xyz}\in P\setminus S_P$, $i_{uv}$ belongs to $S$. It follows that the subset $I_E\subseteq I$ corresponding to the external backward arcs is included in $S_B$. By Lemma~\ref{lem:matching}, there is a matching between $I_E$ and $P\setminus S_P$. This shows that the set external backward arcs can be certified using external arcs only, proving that $\mathcal{P}_\sigma$ is safe.
\end{proofclaim}

This concludes the proof.
\end{proof}

\section{Linear vertex-kernel for \sc{\RTI{}}}
\label{sec:rti}


Let $T$ be a rooted binary tree. The set of leaves of $T$ is denoted $V(T)$ (or $V$ when the context is clear) and we say that $T$ is \emph{defined over $V$}. Hereafter the term \emph{node} stands for internal nodes of $T$.
If $S\subseteq V$, then $T_{|S}$ denotes the rooted binary tree defined over $S$ which is homeomorphic to the subtree of $T$ spanning the leaves of $S$. If $x$ is a node of $T$, then $T_x$ denotes the subtree of $T$ rooted in $x$. 

A \emph{rooted triplet} $t$ is a rooted binary tree defined over a set of three leaves $V(t)=\{a,b,c\}$.  We denote by $t=ab|c$ the rooted triplet such that $a$ and $b$ are siblings of a child of the root of $t$, the other child of the root being $c$. We also say that $t$ \emph{chooses} $c$. A system of rooted triplets is a pair $R= (V, \mathcal{R})$ such that $\mathcal{R}$ is a set of rooted triplets each of which is defined over a triplet $\{a,b,c\}\subseteq V$. 
We say that $R= (V, \mathcal{R})$ is \emph{dense} if  $\mathcal{R}$ contains a rooted triplet for every triplet of distinct leaves $\{a,b,c\}\subseteq V$. If $S \subseteq V$, then the \emph{induced system of rooted triplets} $R[S]$ is the pair $(S,\mathcal{R}[S])$, where $\mathcal{R}[S]=\{t\in \mathcal{R}\mid V(t)\subseteq S\}$.

A rooted triplet $t$ over $\{a,b,c\}$ is \emph{consistent with} a rooted binary tree $T$ if $T_{|\{a,b,c\}}$ is homeomorphic to $t$, and \emph{inconsistent with} $T$ otherwise. We say that $R = (V,\mathcal{R})$ is \emph{consistent} if there exists a rooted binary tree $T$ defined over $V$ such that every $t \in \mathcal{R}$ is consistent with $T$. If such a tree does not exist, then $R$ is \emph{inconsistent}. Observe that if $R = (V,\mathcal{R})$ is dense and consistent, then there exists a unique binary tree $T$ with which every rooted triplet of $\mathcal{R}$ is consistent. 
Let $t$ and $t$' be two distinct rooted triplets over the same set of leaves, then \emph{editing $t$ into $t'$} means replacing $t$ by $t'$ in $\mathcal{R}$. If $\mathcal{S}$ is a set of rooted triplets over $V$, then \emph{editing $R$ according to $\mathcal{S}$} consists  for every rooted triplet $t'\in\mathcal{S}$, in editing $t\in \mathcal{R}$ (with $V(t)=V(t')$) into $t'$. The resulting system of rooted triplets is denoted by $R\odot \mathcal{S}$. We denote by $\mathbf{RTI}(R)$ the minimum size of a set $\mathcal{S}$ of rooted triplets such that $R\odot\mathcal{S}$ is a consistent system of rooted triplets.

An \emph{embedded system of rooted triplets (over $V$)} is a triple $R_T=(V,\mathcal{R},T)$ such that $(V,\mathcal{R})$ is dense and $T$ is a rooted binary tree defined over $V$. Given a subset $S \subseteq V$, we define $R_T[S] = (S, \mathcal{R}[S], T_{|S})$. 
\emph{Editing a subset $\mathcal{S}\subseteq \mathcal{R}$ according to the tree $T$} means editing every $t\in\mathcal{S}$ into $T_{|V(t)}$. This latter operation is denoted by $R\odot_{T}\mathcal{S}$.
By editing $R_T$, we always mean according to $T$. 
The (in)consistency of an embedded system of rooted triplets $R_T$ is defined with respect to the tree $T$. Let $t\in\mathcal{R}$ be a rooted triplet over $\{a,b,c\}\subset V$. 
We define $span(\{a,b,c\})$ (or $span(t)$) as the subset of $V(T)$ containing the leaves of $T_{lca(\{a,b,c\})}$, where $lca$ stands for \emph{least common ancestor} (see Figure~\ref{fig:span}). 
If $t$ is inconsistent with $T$, a \emph{certificate} of $t$ is a set $c(t)=\{a,b,c,d\}\subseteq V$ such that $d\in span(t)$ is a leaf not belonging to any rooted triplet of $\mathcal{R}$ inconsistent with $T$. 
If $t$ and $t'$ are two distinct rooted triplets inconsistent with $T$, the certificate $c(t)$ and $c(t')$ are \emph{triplet-disjoint} if $|c(t)\cap c(t')|\leqslant 2$. A \emph{certificate} for a set $\mathcal{F} \subseteq \mathcal{R}$  of rooted triplets inconsistent with $T$ is a set $c(\mathcal{F}) = \{c(t) : t \in \mathcal{F}\}$ of \emph{triplet-disjoint} certificates.

\begin{figure}[h]
\centerline{\includegraphics[scale=1.5]{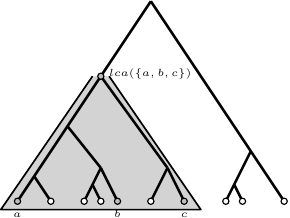}}
\caption{The notion of span for an embedded instance $R_T = (V, \mathcal{R}, T)$ of \RTI{}. The leaves of the gray subtree represents $span(\{a,b,c\})$. \label{fig:span}}
\end{figure}

\subsection{Reduction rules}

Let $R = (V, \mathcal{R})$ be a system of rooted triplets.
We say that $C\subseteq V$ (or $R[C]$ or $\mathcal{R}[C]$ depending on the context) is a \emph{conflict} if it is an inconsistent system of rooted triplets. As shown by the following lemma, conflicts will play an important role.  

\begin{lemma}[\cite{GM10}] \label{lem:localpropertyrti}
A system $R = (V, \mathcal{R})$ of rooted triplets is consistent if and only if it does not contain any conflict on $4$ leaves.
\end{lemma}

A leaf is \emph{irrelevant} if it does not belong to any conflict. The above lemma leads to the first reduction rule which can be applied in polynomial time since it is enough to look for conflicts of size $4$. A reduction rule is \emph{sound} if whenever it is applied to an instance $(R=(V,\mathcal{R}),k)$ of \RTI{} it outputs an equivalent instance $(R'=(V',\mathcal{R}'),k')$.

\begin{polyrule}[\cite{GM10}] \label{rule:rtiuseless}
Remove every irrelevant leaf $v\in V$.
\end{polyrule}

 The following lemma shows that a certificate of a rooted triplet inconsistent with $T$ is a conflict that requires only one edition. This will be helpful for further reduction rules.

\begin{lemma} \label{lem:rticonflict}
Let $R_T = (V,\mathcal{R}, T)$ be a system of rooted triplets and $\{a,b,c,d\}$ be a subset of $V$ such that the rooted triplet $t$ over $\{a,b,c\}$ is the unique rooted triplet of $\mathcal{R}[\{a,b,c,d\}]$ inconsistent with $T$. Then $\mathcal{R}[\{a,b,c,d\}]$ is a conflict if and only if $d \in span(t)$.
\end{lemma}

\begin{proof}
We assume w.l.o.g. that $t = bc|a$. This implies that $T_{|\{a,b,c\}}$ is homeomorphic to $ab|c$ or $ac|b$. The two cases are symmetric, so assume the former holds. Suppose that $d\in span(t)$.
By assumption, the rooted triplet over $\{b,c,d\}$ is $t'=T_{|\{b,c,d\}}$. Observe that $d\in span(t)$ implies $t'\neq bc|d$. So there are the two distinct cases for $t'$:
\begin{itemize}
\item $t'=bd|c$: since $T_{\{a,b,c\}}$ is homeomorphic to $ab|c$ by assumption, and since $t$ is the unique rooted triplet of $\mathcal{R}[\{a,b,c,d\}]$ inconsistent with $T$, we have $ad|c \in \mathcal{R}$. Hence the set $\{bc|a, bd|c, ad|c\}$ defines a conflict.
\item $t'=cd|b$: since $T_{\{a,b,c\}}$ is homeomorphic to $ab|c$ by assumption, and since $t$ is the unique rooted triplet of $\mathcal{R}[\{a,b,c,d\}]$ inconsistent with $T$, we have $ab|d \in \mathcal{R}$. Hence the set $\{bc|a, cd|b, ad|b\}$ defines a conflict. 
\end{itemize}
So we proved that if $d \in span(t)$, $\mathcal{R}[\{a,b,c,d\}]$ is a conflict. 
Assume now that $d\notin span(t)$. Again, since  $bc|a$ is the unique rooted triplet of $R_T[\{a,b,c,d\}]$ inconsistent with $T$, every rooted triplet containing $d$ chooses $d$. Thereby $\mathcal{R}[\{a,b,c,d\}]$ is consistent with some binary tree $T'$ and is not a conflict.
\end{proof}


A \emph{tree partition} of $T$ is a set  $\mathcal{P}_T=\{T_1,\dots, T_l\}$ of subtrees defined over pairwise disjoint sets of leaves $\{V_1,\dots V_l\}$ such that for every $i\in [l]$, $T_i$ is the subtree $T_{x_i}$ for some node or leaf $x_i$ of $T$ and $\{V_1,\dots V_l\}$ is a partition of $V$ (see Figure~\ref{fig:treepartition}). For an embedded system of rooted triplets $R_T=(V,\mathcal{R},T)$, a tree partition of $T$ naturally distinguishes two sets of rooted triplets, namely the \emph{external rooted triplets} $\mathcal{R}_E=\{t\in \mathcal{R}\mid \not\!\!\exists\ i\in[l]\ V(t)\subseteq V(T_i)\}$, and the \emph{internal rooted triplets} $\mathcal{R}_I=\mathcal{R}\setminus \mathcal{R}_E$. A subset of external rooted triplets $\mathcal{F} \subseteq \mathcal{R}$ can be \emph{externally certified} whenever it is possible to certify $\mathcal{F}$ with external rooted triplets only. 

\begin{figure}[h]
\centerline{\includegraphics[scale=1.5]{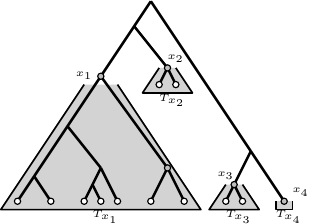}}
\caption{A tree partition $\{T_{x_1}, T_{x_2}, T_{x_3}, T_{x_4}\}$ of an embedded instance of \RTI{}. \label{fig:treepartition}}
\end{figure}

\begin{definition}[Safe Partition] 
\label{def:rtisafe}
Let $R_T = (V,\mathcal{R}, T)$ be an embedded system of rooted triplets. A tree partition $\mathcal{P}_T = \{T_1, \dots, T_l\}$  of $T$ is a \emph{safe partition} if: 
\begin{itemize}
\item $\mathcal{R}_E$ contains at least one triplet inconsistent with $T$ and, 
\item the rooted triplets of $\mathcal{R}_E$ inconsistent with $T$ can be externally certified.
\end{itemize}	
\end{definition}

\begin{polyrule}[Safe Partition] \label{rule:rtisafe}
Let $(R=(V,\mathcal{R}),k)$ be an instance of \RTI{}, $\mathcal{P}_T = \{T_1, \dots, T_l\}$ be a safe partition of $R_T=(V,\mathcal{R}, T)$ where $T$ is a tree defined over $V$. If $\mathcal{F}$ is the set of rooted triplets of $\mathcal{R}_E$ inconsistent with $T$, then return $(R',k')$ where $R'=R\odot_T \mathcal{F}$ and $k'=k-|\mathcal{F}|$.
\end{polyrule}

\begin{lemma} \label{lem:rtisafe}
Rule~\ref{rule:rtisafe} is sound.
\end{lemma}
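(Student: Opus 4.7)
The plan is to establish that $(V, \mathcal{R})$ admits a tree with at most $k$ inconsistent triplets if and only if the edited instance $(V, \mathcal{R}')$ admits one with at most $k - |\mathcal{F}|$. The direction $(\Leftarrow)$ is immediate: editing a single triplet changes its consistency with any fixed tree by at most one, so a witness for the reduced instance gains at most $|\mathcal{F}|$ extra inconsistencies when the edited triplets are reverted.

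For the direction $(\Rightarrow)$, let $T^*$ be any tree on $V$ with at most $k$ inconsistencies in $\mathcal{R}$. The plan is to construct from $T^*$ and $T$ a tree $T^{**}$ that agrees with $T$ on every triplet of $\mathcal{F}$ while preserving the inconsistency bound. Let $T^{top}$ be the binary tree obtained from $T$ by contracting each $T_i$ of the safe partition to a leaf, and define $T^{**}$ as the tree whose top-level structure equals $T^{top}$ and whose subtree in place of each contracted leaf $V(T_i)$ is $T^*_{|V(T_i)}$. Two observations about $T^{**}$ should be verified: (i) for any inner triplet $t$ with $V(t) \subseteq V(T_i)$, we have $T^{**}_{|V(t)} = T^*_{|V(t)}$; (ii) for any outer triplet $t$, $T^{**}_{|V(t)} = T_{|V(t)}$, because if its three leaves lie in two or three distinct parts then the induced topology is determined by $T^{top}$ together with the partition, a datum shared by $T$ and $T^{**}$.

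The last ingredient is the certificate packing provided by the safe partition. Each certificate $c(t)$ for $t \in \mathcal{F}$ induces a simple conflict of four leaves whose four triplets all belong to $\mathcal{R}_O$ (by Definition~\ref{def:rtisafe}), so $T^*$ is inconsistent with at least one triplet of $\mathcal{R}[c(t)]$; triplet-disjointness of the $|\mathcal{F}|$ certificates then forces at least $|\mathcal{F}|$ outer inconsistencies in $T^*$. Combined with (i) and (ii), this yields that the inconsistencies of $T^{**}$ in $\mathcal{R}$ equal the inner inconsistencies of $T^*$ plus exactly $|\mathcal{F}|$ coming from $\mathcal{R}_O$, which is at most the total inconsistencies of $T^*$, hence at most $k$. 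Since $T^{**}$ matches $T$ on every triplet of $\mathcal{F}$, editing turns exactly these $|\mathcal{F}|$ inconsistencies into consistencies and leaves every other triplet unchanged, so $T^{**}$ realizes at most $k - |\mathcal{F}|$ inconsistencies in $\mathcal{R}'$.

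The main obstacle is the verification of observation (ii): one must check carefully that the consistency of an outer triplet with a tree built by plugging arbitrary subtrees into $T^{top}$ depends only on $T^{top}$ and on the assignment of leaves to parts, never on the internal structure of the subtrees themselves. Once this is settled, the packing argument supplied by the safe partition furnishes precisely the $|\mathcal{F}|$ outer inconsistencies needed to offset the cost of reshaping $T^*$ into $T^{**}$, and the $(\Rightarrow)$ direction follows by the counting above.
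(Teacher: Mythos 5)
Your proof is correct and it pursues essentially the same two ingredients as the paper (the triplet-disjoint certificate packing giving the $|\mathcal{F}|$ lower bound, and the decomposition of the instance into the tree-partition parts), but it packages them differently and, in fact, more directly. The paper's argument is ``conflict-centric'': it uses Lemma~\ref{lem:rticonflict} and Observation~\ref{obs:saferti} to show that after editing $\mathcal{F}$ every remaining conflict lives inside one $T_i$ and that inner edits cannot re-create outer conflicts, then concludes (rather tersely) that some optimal edition of $R_T$ contains $\mathcal{F}$. Your argument is ``tree-centric'': from a witness tree $T^*$ with at most $k$ inconsistencies you explicitly build $T^{**}$ by grafting $T^*_{|V(T_i)}$ onto the contracted top-tree $T^{top}$; observations (i) and (ii) then say $T^{**}$ agrees with $T^*$ on inner triplets and with $T$ on outer triplets, which makes the counting immediate and renders the conflict-localization lemmas unnecessary. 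Both observations do hold: for an outer triplet whose leaves land in two parts $T_i\ni a,b$ and $T_j\ni c$, the restriction is forced to be $ab|c$ regardless of the internal shape of the grafted subtrees, and for three distinct parts the restriction is read off $T^{top}$; for inner triplets, $T^{**}_{|V(T_i)}=T^*_{|V(T_i)}$ by construction. What your proof buys is a cleaner, fully explicit treatment of the $(\Rightarrow)$ direction, which the paper handles somewhat implicitly; what it trades away is the conflict-localization picture that the paper uses again in the proof of Theorem~\ref{thm:kernelrti}. One small nitpick: you flag the verification of (ii) as ``the main obstacle'', but it is a one-line check of the form just given, so you should simply carry it out rather than leave it conditional.
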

\begin{proof}
We use the following observation, which follows from the definition of tree partition.

\begin{observation} \label{obs:saferti}
Let  $R_T = (V, \mathcal{R}, T)$ be an embedded system of rooted triplets and  $\mathcal{P}_T = \{T_1, \dots, T_l\}$ be a tree partition of $T$. If  $t$ is a rooted triplet such that $V(t) \subseteq V(T_i)$ for some $1 \leqslant i \leqslant l$, and $d \in V \setminus V(T_i)$, then $d \notin span(t)$.
\end{observation}

 Let us consider the embedded system of rooted triplets $R'_T=(V,\mathcal{R}',T)=R_T\odot_T \mathcal{F}$.

\begin{claim} \label{cl:local-conflict}
If $\mathcal{R}'[C]$, with $C= \{a,b,c,d\}$, is a conflict, then $C\subseteq V(T_i)$ for some $1 \leqslant i \leqslant l$.
\end{claim}
\begin{proofclaim}
Let $t$ be a rooted triplet of $R'_T[C]$ inconsistent with $T$ (suppose that $V(t)=\{a,b,c\}$). Clearly, since $t\notin \mathcal{F}$, there exists $i\in[l]$ such that $\{a,b,c\}\subseteq V(T_i)$. Now, assume for a contradiction that $d$ is not a leaf of $T_i$. By Observation~\ref{obs:saferti}, we have $d\notin span(t$).  Moreover by definition of $R'_T$, every external rooted triplet of $R'_T$ is consistent with $T$. So the unique rooted triplet inconsistent with $T$ in $R'_T[C]$ is $t$. By Lemma~\ref{lem:rticonflict}, this implies that $C$ is not a conflict: contradiction.
\end{proofclaim}

\noindent
Let us now prove that $\mathbf{RTI}(R)\leqslant k$ iff  $\mathbf{RTI}(R')\leqslant k - |\mathcal{F}|$:

\noindent
\smallskip
($\Rightarrow$) 
Let $\mathcal{S}$ be a set of $k$ rooted triplets such that $R\odot \mathcal{S}$ is consistent. Let us define $\mathcal{S}_E=\{t\in \mathcal{S}\mid \not\!\!\exists\ i\in[l]\ V(t)\subseteq V(T_i)\}$ and  $\mathcal{S}_I=\mathcal{S}\setminus\mathcal{S}_E$.
If $R'_T\odot\mathcal{S}_I$ is inconsistent, by  Claim~\ref{cl:local-conflict} there exists  a conflict $C$ contained in $V(T_i)$ for some $i\in[l]$. Clearly none of the editions of $\mathcal{S}_E$ can solve $C$, implying that $R\odot\mathcal{S}$ is also inconsistent: contradiction. So $R'_T\odot\mathcal{S}_I$ is consistent. To conclude if suffices to observe that since $\mathcal{P}_T$ is a safe partition, there exists a set $c(\mathcal{F})$ of triplet-disjoint certificates for $\mathcal{F}$, each of which being  composed only of rooted triplets of $\mathcal{R}_E$. Thereby $|\mathcal{S}_E|\geqslant|\mathcal{F}|$ and $|\mathcal{S}_I|\leqslant k-|\mathcal{F}|$.

\smallskip
\noindent
($\Leftarrow$) Suppose that $\mathcal{S}\subseteq \mathcal{R}'$ is a set of $k - |\mathcal{F}|$ rooted triplets such that $R'\odot\mathcal{S}$ is consistent. Let us recall that $R'=R\odot_T \mathcal{F}$ and let us denote by $\mathcal{F}_T$ the set of rooted triplets such that $R'=R\odot \mathcal{F}_T$. Clearly $R\odot (\mathcal{S}\cup\mathcal{F}_T)$ is consistent, implying that $\textbf{RTI}(R)\leqslant k$.

 \end{proof}


\subsection{Computing a safe partition and kernel size}


To prove that Rule~\ref{rule:rtiuseless} and Rule~\ref{rule:rtisafe} yield a linear vertex kernel, it remains to show that either a safe partition exists and can be computed in polynomial time or prove a (linear) bound on the number of leaves of $R=(V,\mathcal{R})$. Again, we apply the \emph{Conflict Packing} technique. However, defining a conflict packing as a maximal set of certificates that are pairwise rooted triplet-disjoint, will not provide a tight enough lower bound on $\mathbf{RTI}(R)$. To see this consider the following example: 

\smallskip
\noindent
\emph{\textbf{Example:}} Let $R=(V,\mathcal{R})$ be a dense system of rooted triplets such that $C=\{a,b,c,d\}$ and $C'=\{b,c,d,e\}$ are two conflicts of $R$ and $\{ab|c,ac|d,ad|b\}\subset \mathcal{R}$. Observe first that $C$ and $C'$ share the rooted triplet $t$ over $\{b,c,d\}$. Observe also that whichever rooted triplet $t$ is defined over $\{b,c,d\}$, $C$ is a conflict (see Figure~\ref{fig:seed}). It follows that solving $C$ and $C'$ requires at least two editions: one on the rooted triplets of $C'$ and one on $\{ab|c,ac|d,ad|b\}$.

\begin{definition}[Seed] \label{def:seed}
Let $C=\{a,b,c,d\}$ be a conflict in a dense system of rooted triplets $R=(V,\mathcal{R})$. We say that $a$ is a seed for $C$ if $C$ is a conflict in $R\odot \{t\}$ for every rooted triplet $t$ with $V(t)=\{b,c,d\}$.
\end{definition}

In the above example, the leaf $a$ is a seed (see Figure~\ref{fig:seed}).

\begin{figure}[h]
\centerline{\includegraphics[scale=1.75]{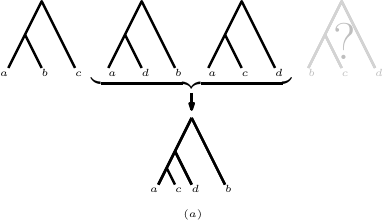}}
\caption{The notion of seed: $a$ is a seed of the conflict $C = \{a,b,c,d\}$ since $C$ is a conflict in $R\odot \{t\}$ for every rooted triplet $t$ with $V(t)=\{b,c,d\}$. \label{fig:seed}} 
\end{figure}	
	
\begin{definition}[Conflict Packing] \label{def:cprti}
Two distinct conflicts of size $4$, $C$ and $C'$, of a dense system of rooted triplets $R=(V,\mathcal{R})$ are \emph{independent} if: 
\vspace{-0.1cm}
\begin{itemize}
\item $|C\cap C'|\leqslant 2$ or 
\vspace{-0.1cm}
\item $|C\cap C'|=3$ and the leaf $\ell\in C\setminus C'$ is a seed for $C$ (or the leaf $\ell'\in C'\setminus C$ is a seed for $C'$)
\end{itemize}
A \emph{conflict packing} is a maximal collection of pairwise independent conflicts of size $4$ of $R$.

\end{definition}

Clearly a conflict packing of a dense system of rooted triplets $R=(V,\mathcal{R})$ can be computed in polynomial time using a greedy algorithm. If $\mathcal{C}$ is a conflict packing, then $V(\mathcal{C})$ denotes the set of leaves covered by the conflicts of $\mathcal{C}$.

\begin{lemma}  \label{lem:cprti}
Let $\mathcal{C}$ be a conflict packing of a dense system of rooted triplets $R = (V, \mathcal{R})$. If $\mathbf{RTI}(R)\leqslant k$, then $|V(\mathcal{C})| \leqslant 4k$.
\end{lemma} 

\begin{proof}
Suppose that $\mathcal{C}$ contains $l$ conflicts $C_1,\dots C_l$. We prove that $\mathbf{RTI}(R)\geqslant l$.
Let $t$ be a rooted triplet such that $C_j$ ($1\leqslant j\leqslant l$) is not a conflict in $R'=R\odot \{t\}$. Let us prove that for every $i\neq j$, $1\leqslant i \leqslant l$, $C_i$ is a conflict of $R'$. 
If $V(t)\not\subseteq C_i$, then $C_i$ is still a conflict in $R'$. So suppose that $V(t)=C_i\cap C_j$, implying that either $\ell_i\in C_i\setminus C_j$ is a seed for $C_i$ or $\ell_j\in C_j\setminus C_i$ is a seed for $C_j$. Observe that the latter case is not possible because $\ell_j$ being a seed implies that editing $t$ cannot solve the conflict $C_j$. Similarly, in the former case, since $\ell_i$ is a seed, $C_i$ remains a conflict in $R'$. It follows that every conflict of $\mathcal{C}$ requires the edition of a distinct rooted triplet, implying that $\mathbf{RTI}(R)\geqslant l$ as claimed. 
As by assumption $\mathbf{RTI}(R)\leqslant k$, we have $|\mathcal{C}|\leqslant k$ and thereby $|V(\mathcal{C}|\leqslant 4k$.
 \end{proof}

\begin{lemma}[Conflict packing]  \label{lem:realcprti}
Let $\mathcal{C}$ be a conflict packing of a dense system of rooted triplets $R = (V, \mathcal{R})$.
There exists an embedded tree $T$ of $R$ such that every rooted triplet $t$ inconsistent with $T$ satisfies $V(t)\subseteq V(\mathcal{C})$. Such a tree can be computed in polynomial time.
\end{lemma}
\begin{proof}
Let $V'=V\setminus V(\mathcal{C})$. Observe that by maximality of $\mathcal{C}$, $R[V']$ is consistent. Let $T'$ be the unique tree with which $R[V']$ is consistent. Moreover, for every leaf $a\in V(\mathcal{C})$, $R_{a}=R[V'\cup\{a\}]$ is consistent (otherwise $\mathcal{C}$ would not be maximal). Thereby there exists a unique binary tree $T_{a}$ such that every rooted triplet $t$ of  $R_{a}$ is consistent with $T_{a}$. In other words $T'$ contains a unique tree edge $e=xz$ which can be subdivided into $xyz$ to attach the leaf $a$ to node $y$. Hereafter the edge $e$ will be called the \emph{locus} of $a$.
The maximality argument on $\mathcal{C}$ also implies that for any pair of leaves $a$ and $b$ in $V(\mathcal{C})$, $R_{ab}=R[V'\cup\{a,b\}]$ is consistent. If $a$ and $b$ have different loci, then $R_{ab}$ is clearly consistent with the tree obtained from $T'$ by inserting $a$ and $b$ in their respective loci. It remains to consider the case where $a$ and $b$ have the same locus.
Let $e=xz$ be a tree edge of $T'$ such that $x$ is the child of $z$ and let $L_e\subseteq V(\mathcal{C})$ be the subset of leaves whose locus is $e$. Given $a,b \in L_e$, we define the binary relation $<_e$ on $L_e$ as follows: 
$$a<_e b \mbox{ if there exists } c\in V' \mbox{ such that } ac|b\in \mathcal{R}$$

%
\begin{claim}
\label{claim:swo}
The relation $<_e$ is a strict weak ordering (i.e. $<_e$ is transitive and asymmetric)
\end{claim}

\begin{proofclaim}
Observe first  that, if $a <_e b$, then the leaf $c$ such that $ac|b\in \mathcal{R}$ belongs to $V(T'_x)$ (otherwise since $R_{ab}$ is consistent we would have $ab|c \in \mathcal{R}$). Assume $<_e$ is not asymmetric. Then there exist two leaves $c\in V(T'_x)$ and $d\in V(T'_x)$ such that $ac|b\in\mathcal{R}$ and $bd|a\in\mathcal{R}$. Since $R_a$ and $R_b$ are consistent and since $c$ and $d$ belong to $V(T'_x)$, we also have $cd|a\in\mathcal{R}$ and $cd|b\in\mathcal{R}$. It follows that $\{a,b,c,d\}$ is a conflict: contradicting the fact that $R_{ab}$ is consistent for every $a,b\in L_e$. So $<_e$ is asymmetric.

Suppose now that we have $a,b,c\in L_e$ such that $a<_e b$ and $b<_e c$. Observe that for every $d\in V(T'_x)$, we have $ad|b\in\mathcal{R}$ and $bd|c\in\mathcal{R}$. 
Assume that $dc|a\in\mathcal{R}$ (the case $ac|d$ is similar). Then whatever the rooted triplet on $\{a,b,c\}$ is, $\{a,b,c,d\}$ is a conflict. Hence $d$ is a seed of the conflict $\{a,b,c,d\}$, contradicting the maximality of $\mathcal{C}$. It follows that $da|c\in \mathcal{R}$ and thereby $a<_e c$. Hence $<_e$ is transitive.
\end{proofclaim}

Let us denote by $\sim_e$ the incomparability relation on $L_e$ with respect to $<_e$. It is well known that as $<_e$ is a strict weak ordering, then $\sim_e$ is an equivalence relation on $L_e$. Moreover the equivalence classes $L_1\dots L_q$ of $\sim_e$ are totally ordered with respect to $<_e$: we say that $L_i<_e L_j$ if for any $a\in L_i$ and $b\in L_j$, $a<_e b$.

Let us now describe how to build the tree $T$ from $T'$ (see Figure~\ref{fig:tree}). For every tree edge $e=xz$ with $x$ a child of $z$ such that $L_e\neq\emptyset$ we proceed as follows. Let $L_1,\dots, L_q$ be the equivalence classes of $\sim_e$ such that $L_i <_e L_j$ for $1\leqslant i<j\leqslant q$. The tree edge $e$ is subdivided into the path $x,y_1,\dots, y_q,y$. For every $i\in[q]$, if $L_i$ contains a unique leaf $a$, then $a$ is attached to node $y_i$. Otherwise, a new node $w_i$ is attached to $y_i$ and we add an arbitrary binary tree (rooted in $w_i$) over the leaves of $L_i$. Clearly, given a conflict packing, $T$ can be computed in polynomial time.

\begin{figure}[h]	
\centerline{\includegraphics[scale=1.75]{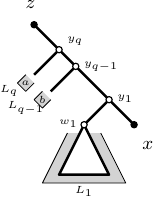}}
\caption{Illustration of the construction of the tree $T'$. \label{fig:tree}}
\end{figure}
	
We now prove that $T$ satisfies the statement. Let $t = \{a,b,c\}$ be any triplet of $\mathcal{R}$. If $V(t) \subseteq V'$, then $t$ is consistent by construction. If $V(t) \cap V(\mathcal{C}) = \{a\}$, then $t$ is consistent with $T$ since $R_a$ is consistent and $a$ has been inserted to its locus. Finally, assume $V(t) \cap V(\mathcal{C}) = \{a,b\}$. If $a$ and $b$ have different loci then $t$ is clearly consistent with $T$. Now, if $a$ and $b$ have the same locus $e$ then $t$ is consistent since $a$ and $b$ have been added to $e$ according to the strict weak ordering $<_e$. It follows that any triplet of $T$ such that $V(t) \cap V' \neq \emptyset$ is consistent with $T$.
 \end{proof}

%
%
\begin{theorem} \label{thm:kernelrti}
The \RTI{} problem admits a kernel with at most $5k$ leaves.
\end{theorem}

\begin{proof}
We prove the result by showing that any instance $(R = (V, \mathcal{R}), k)$ of \RTI{} such that $R$ is reduced under Rule~\ref{rule:rtiuseless} and $|V| > 5k$ is either a negative instance or can be reduced using Rule~\ref{rule:rtisafe}. \\ 

We start by computing a conflict packing $\mathcal{C}$ in polynomial time, for instance using a greedy algorithm. Let $T$ be the tree obtained in polynomial time through Lemma~\ref{lem:realcprti}. Construct in polynomial time the following bipartite graph $B = (I \cup V', E)$ where: 
\begin{itemize}
\vspace{-0.2cm}
\item $V' = V \setminus V(\mathcal{C})$, 
\vspace{-0.2cm}
\item there is a vertex $i_{t}$ in $I$ for every rooted triplet $t$ inconsistent with $T$ and, 
\vspace{-0.2cm}
\item $i_{t}v \in E$ if $v \in V'$ and $\{v\}\cup V(t)$ is a certificate of $t$.
\end{itemize}

\begin{figure}[t]
\begin{center}
\includegraphics[scale=1.5]{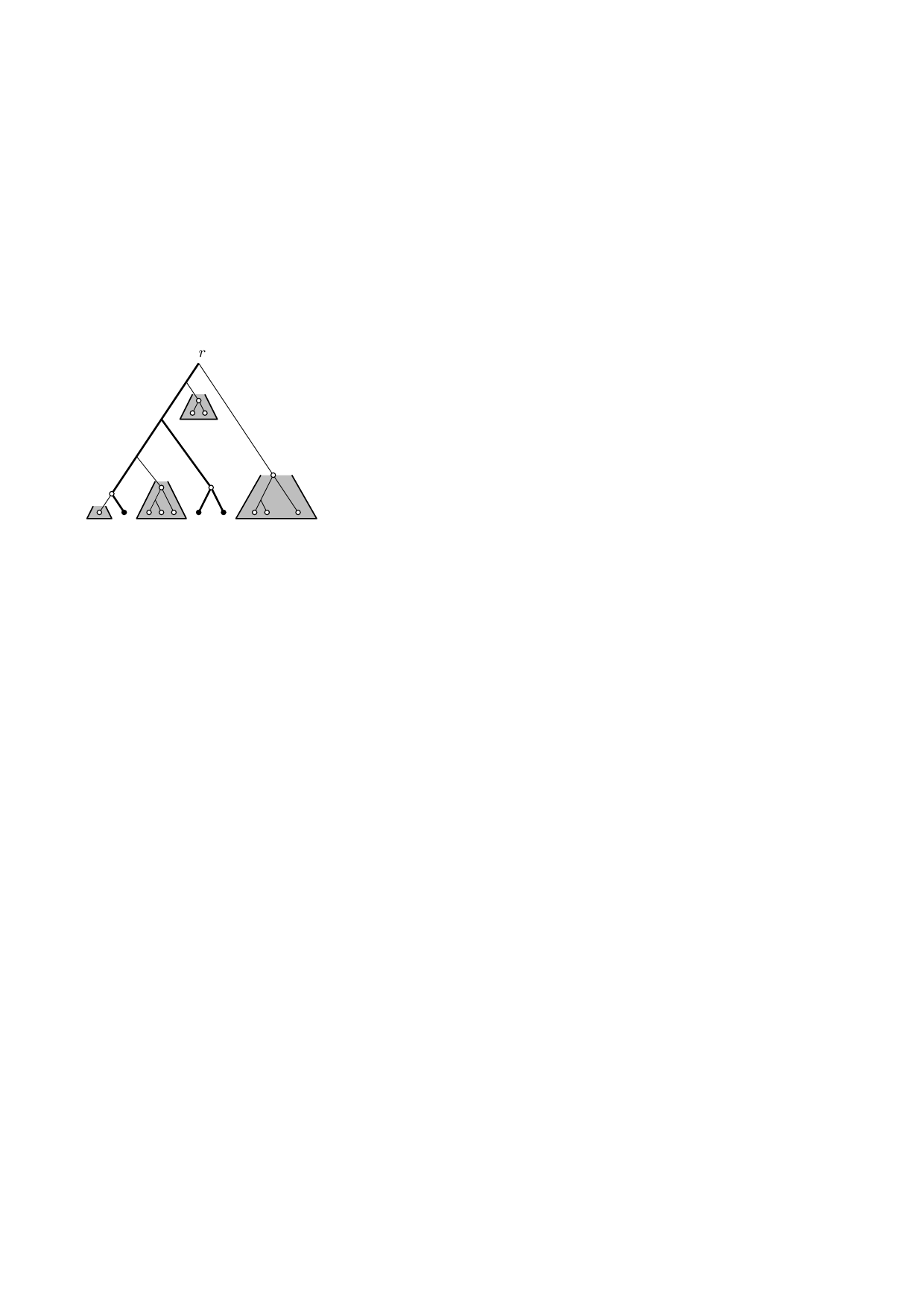}
\end{center}
\caption{Illustration of the construction of $\mathcal{P}_T$. The black vertices belong to $V' \setminus D_{V'}$ and the bold edges represent $S$. The partition $\mathcal{P}_T$ is pictured by the filled sets.}
\label{fig:saferti}
\end{figure}

Observe that if $B$ has a matching of size $k+1$, then $\mathbf{RTI}(R)> k$ (Lemma~\ref{lem:cprti}) and thus the instance is negative. Hence a minimum vertex cover $D$ of $B$ has size at most $k$~\cite{BM76}. We denote $D_I=D\cap I$ and $D_{V'}=D\cap V'$. 
Assume that $|V| > 5k$. Since $|D_{V'}| \leqslant k$ and $|V(\mathcal{C})| \leqslant 4k$ (Lemma~\ref{lem:cprti}), we have $V' \setminus D_{V'} \neq \emptyset$. 
Let $\mathcal{P}_T=\{T_1,\dots, T_l\}$ be a tree partition of $R_T$ such that every tree $T_i$, $i\in[l]$, consists of either a leaf of $V'\setminus D_{V'}$ or a connected component of $T \setminus S$, where $S$ is the smallest spanning subtree of $(V'\setminus D_{V'}) \cup \{r\}$ ($r$ being the root of $T$, see Figure~\ref{fig:saferti}).

\begin{claim} \label{claim:pissaferti}
The partition $\mathcal{P}_T = \{T_1, \dots, T_l\}$ is safe.
\end{claim}

\begin{proofclaim}	
Let $a$ be any leaf of $V' \setminus D_{V'}$. By Lemma~\ref{lem:realcprti}, $a$ is not contained in any rooted triplet inconsistent with $T$. Since $R$ is reduced under Rule~\ref{rule:rtiuseless}, there exists an inconsistent rooted triplet $t$ such that $a\in span(t)$. It follows that $\mathcal{R}_E$ contains at least one inconsistent rooted triplet.
Let $t\in \mathcal{R}_E$ be a rooted triplet inconsistent with $T$. By construction of $\mathcal{P}_T$, there exists a leaf $a \in (V' \setminus D_{V'})\cap span(t)$. Then $\{a\}\cup V(t)$ is a certificate of $t$ and $i_ta$ is an edge of $B$. Observe that since $D$ is a vertex cover and $a\notin D_{V'}$, the vertex $i_{t}$ has to belong to $D_I$ to cover the edge $i_{t}a$. Thereby the subset $I'\subseteq I$ corresponding to the rooted triplets of $\mathcal{R}_E$ inconsistent with $T$ is included in $D_I$.  \\

By Lemma~\ref{lem:matching}, we know that $I'$ can be matched into $V' \setminus D_{V'}$ in $B$. 
Since every leaf of $V' \setminus D_{V'}$ is a singleton in $\mathcal{P}_T$, the existence of the matching shows that the set of rooted triplets of $\mathcal{R}_E$ inconsistent with $T$ can be certified using rooted triplet of $\mathcal{R}_E$ only, and hence $\mathcal{P}_T$ is safe. 
\end{proofclaim}

This concludes the proof.
 \end{proof}

\section{Linear vertex-kernel for {\sc Betweenness Triple Inconsistency}}
\label{sec:bit}


Let $V$ be a set of vertices. A \emph{system of betweenness constraints over $V$} is a pair $R=(V,\mathcal{B})$ where $\mathcal{B}$ is a set of betweenness contraints. Let us recall that a betweenness constraint is represented by an ordered triple $t=(a,b,c)$ of distinct vertices and note $V(t)=\{a,b,c\}$. We say that $R=(V,\mathcal{B})$ is \emph{dense} if $\mathcal{B}$ contains a betweenness constraint for every triple of distinct vertices $(a,b,c)$. Given $S \subseteq V$, we define $R[S]$ as the pair $(S,\mathcal{B}[S])$, where $\mathcal{B}[S]=\{t\in \mathcal{B}\mid V(t)\subseteq S\}$.


The betweenness constraint $t=(a,b,c)$ is \emph{consistent} with an ordering $\sigma$ of the vertices of $V$ if $b$ lies between $a$ and $c$ in $\sigma$, that is $a<_{\sigma} b<_{\sigma} c$ or  $c<_{\sigma} b<_{\sigma} b$. Observe that $(a,b,c)$ and $(c,b,a)$ are the same betweenness constraints. We say that $R$ is \emph{consistent} if there exists an ordering $\sigma$ of $V$ such that every betweenness constraint $t\in\mathcal{B}$ is consistent with $\sigma$. We also say that $R$ is \emph{consistent with} $\sigma$. Observe that if $R$ is dense and consistent, then there exists a unique ordering $\sigma$ (up to reversal) with which $R$ is consistent.
If such an ordering does not exist, then $R$ is \emph{inconsistent}.  A \emph{conflict} is a subset $C\subseteq V$ such that $R[C]$ is inconsistent. 

\begin{lemma} \label{lem:consistentbit}
A dense system of betweenness triplets $R = (V, \mathcal{B})$ is consistent if and only if it does not contain any conflict on $4$ vertices.
\end{lemma}
\begin{proof}
The forward direction is trivial. Let us prove by induction on the number of vertices that  if $R$ does not contain any conflict on $4$ vertices, then $R$ is consistent. The statement obviously holds for $n=4$. Suppose it is true up to $n$ vertices and let 
$R=(V,\mathcal{B})$ such that $|V|=n+1$. Let $x\in V$, then by assumption $R'=R[V\setminus \{x\}]$ is consistent. Let $\sigma'=v_1\dots v_i\dots v_n$ be the unique ordering with which $R'$ is consistent.

\begin{claim} \label{cl:bit1}
Let $i,j$ be integers such that $1\leqslant i<j\leqslant n$, $(v_i,x,v_j)\in\mathcal{B}$ and $j-i$ is minimum. If $j\neq i+1$ then $R$ contains a conflict on $4$ vertices.
\end{claim}
\begin{proofclaim}
If $j\neq i+1$, there exists $l$ such that $i<\ell<j$. 
As $R'$ is consistent with $\sigma'$, $(v_i,v_{\ell},v_j)\in\mathcal{B}$. Together with the assumption that $(v_i,x,v_j)\in\mathcal{B}$, this implies that either $(v_i,x,v_{\ell})\in \mathcal{B}$ or $(v_{\ell},x,v_j)\in \mathcal{B}$ (otherwise $R[\{v_{\ell},v_i,v_j,x\}]$ would be a conflict of size $4$). But then $\ell-i<j-i$ contradicts the choice of $v_i$ and $v_j$. 
\end{proofclaim}

\begin{claim} \label{cl:bit2}
Let $i,j$ be integers such that $1\leqslant i<j\leqslant n$. If $(v_i,x,v_j)\in\mathcal{B}$, then for every $\ell$ and $\ell'$ such that $1\leqslant \ell\leqslant i$ and  $j\leqslant \ell'\leqslant n$, we have $(v_{\ell},x,v_j)\in \mathcal{B}$, $(v_i,x,v_{\ell'})\in \mathcal{B}$ and $(v_{\ell},x,v_{\ell'})\in\mathcal{B}$.
\end{claim}
\begin{proofclaim}
Let us first prove that $(v_{\ell},x,v_j)\in \mathcal{B}$. 
As $R'$ is consistent with $\sigma'$, $(v_{\ell},v_i,v_j)\in\mathcal{B}$. Together with the assumption that $(v_i,x,v_j)\in\mathcal{B}$, this implies that $(v_{\ell},x,v_j)\in\mathcal{B}$ (otherwise $R[\{v_{\ell},v_i,v_j,x\}]$ would be a conflict of size $4$). The proof that $(v_i,x,v_{\ell'})\in \mathcal{B}$  is symmetric. Knowing that $(v_i,x,v_{\ell'})\in \mathcal{B}$, as $\ell<i$ we can apply the above arguments on $\ell$ and $i,\ell'$ to prove that that $(v_{\ell},x,v_{\ell'})\in\mathcal{B}$.
\end{proofclaim}

\begin{claim} \label{cl:bit3}
Let $i,j$ be integers such that $1\leqslant i<j\leqslant n$.  If $(v_i,v_j,x)\in\mathcal{B}$, then for every $\ell$ such that $\ell\leqslant i$, we have $(v_{\ell},v_j,x)\in \mathcal{B}$ and $(v_{\ell},v_i,x)\in \mathcal{B}$.
If $(x,v_i,v_j)\in\mathcal{B}$, then for every $\ell'$ such that $j\leqslant \ell'$, we have $(x,v_i,v_{\ell'})\in \mathcal{B}$ and $(x,v_j,v_{\ell'})\in \mathcal{B}$. 
\end{claim}
\begin{proofclaim}
As $R'$ is consistent with $\sigma'$, $(v_{\ell},v_i,v_j)\in\mathcal{B}$. Together with the assumption that $(v_i,v_j,x)\in\mathcal{B}$, this implies that $(v_{\ell},v_j,x)\in\mathcal{B}$ and $(v_{\ell},v_i,x)\in \mathcal{B}$ (otherwise $R[\{v_{\ell},v_i,v_j,x\}]$ would be a conflict of size $4$). By symmetric arguments, $(x,v_i,v_j)\in\mathcal{B}$ implies $(x,v_i,v_{\ell'})\in \mathcal{B}$ and $(x,v_j,v_{\ell'})\in \mathcal{B}$.
\end{proofclaim}

So by Claim~\ref{cl:bit1}, if $\mathcal{B}$ contains a triplet of the form $(v_{\ell},x,v_{\ell'})$, then there exists $i$, $1\leqslant i<n$, such that $(v_i,x,v_{i+1})\in \mathcal{B}$. By Claim~\ref{cl:bit2}, for every $\ell$ and $\ell'$ such that $\ell<i$ and $i+1<\ell'$ we have: $(v_{\ell},x,v_{i+1})\in \mathcal{B}$, $(v_i,x,v_{\ell'})\in \mathcal{B}$ and $(v_{\ell},x,v_{\ell'})\in\mathcal{B}$. By Claim~\ref{cl:bit3}, for every $1\leqslant \ell<\ell'\leqslant i$, $(v_{\ell},v_{\ell'},x)\in\mathcal{B}$ and for every $i+1\leqslant h<h'\leqslant n$, $(x,v_h,v_{h'})\in\mathcal{B}$. It follows that inserting $x$ in $\sigma'$ between $v_i$ and $v_{i+1}$ yields an ordering $\sigma$ with which $R$ is consistent. It remains to consider the case where for every pair $\ell, \ell'$, we have $(x,v_{\ell},v_{\ell'})\in\mathcal{B}$ or $(x,v_{\ell}',v_{\ell})\in\mathcal{B}$. By Claim~\ref{cl:bit3}, if $(x,v_1,v_2)\in\mathcal{B}$, then for every $1\leqslant i<j\leqslant n$, $(x,v_i,v_j)\in\mathcal{B}$ and thereby $x$ can be inserted before $v_1$ in $\sigma'$. Likewise, if $(v_{n-1},v_n,x)\in\mathcal{B}$, then by Claim~\ref{cl:bit3} for every $1\leqslant i<j\leqslant n$, $(v_i,v_j,x)\in\mathcal{B}$ and thereby $x$ can be inserted after $v_n$ in $\sigma'$. In any case, $R$ is consistent with the resulting ordering $\sigma$.
\end{proof}

\emph{Editing} a betweenness constraint $t\in\mathcal{B}$ into $t'$ consists of replacing $t'$ by $t$ in $\mathcal{B}$. Likewise if $\mathcal{S}$ is a set of betweenness constraints, \emph{editing $R$ according to $\mathcal{S}$} consists for every $t'\in\mathcal{S}$ in editing $t\in\mathcal{B}$, such that $V(t)=V(t')$, into $t'$.  The resulting system of betweenness constraints is  denoted by $R\odot \mathcal{S}$. If $\mathcal{S}\subseteq \mathcal{B}$, \emph{editing $\mathcal{S}$ according to the vertex ordering $\sigma$} consists in editing every betweenness constraint $t\in\mathcal{S}$ so that it becomes consistent with $\sigma$. In other words, if $V(t)=\{a,b,c\}$, then $t$ is edited into $t'=(a,b,c)$ if and only if $a<_{\sigma} b <_{\sigma} c$. This latter operation is denoted $R\odot_{\sigma}\mathcal{S}$.
We denote by $\mathbf{BTI}(R)$ the minimum size of a set $\mathcal{S}$ of betweenness constraints such that $R\odot\mathcal{S}$ is consistent.

An \emph{ordered system of betweenness constraints} over $V$ is a triple $R_{\sigma}=(V,\mathcal{B},\sigma)$ such that $(V,\mathcal{B})$ is dense and $\sigma$ is an ordering of $V$. For $S\subseteq V$, we also define $R_\sigma[S]=(S,\mathcal{B}[S],\sigma_S)$ with $\sigma_S$ being the restriction of $\sigma$ to $S$. As shown by the following lemma, the vertex ordering $\sigma$ eases the identification of conflict of size $4$.

\begin{lemma}\label{lem:conflictbit}
Let $R_{\sigma} = (V, \mathcal{B},\sigma)$ be an ordered dense system of betweenness triplets. If $R_\sigma[\{a,b,c,d\}]$, with $\{a,b,c,d\}\subseteq V$, contains a unique betweenness constraint $t$ inconsistent with $\sigma$, then $\{a,b,c,d\}$ is a conflict.
\end{lemma}
\begin{proof}
Assume w.l.o.g. that $V(t) = \{a,b,c\}$ and $t = (b,c,a)\in \mathcal{B}$. It follows that either $a <_\sigma b <_\sigma c$ or  $b <_\sigma a <_\sigma c$ (the cases of reverse orderings are symmetric). Let us denote by $t'$ and $t"$ the betweenness constraints such that $V(t')=\{b,c,d\}$ and $V(t")=\{a,c,d\}$. Let us show that the three possibilities for $t'$ lead to a conflict on $\{a,b,c,d\}$:
\begin{itemize}
\item $t'=(b,c,d)$: If $a <_\sigma b <_\sigma c$, then $t'$ being consistent with $\sigma$ implies that $a <_\sigma b <_\sigma c<_{\sigma} d$. As $t$ is the only betweenness constraint of $R_\sigma[\{a,b,c,d\}]$ inconsistent with $\sigma$, we have that $t"=(a,c,d)\in\mathcal{B}$. Likewise, if $b <_\sigma a <_\sigma c$ holds, then we have $b <_\sigma a <_\sigma c<_{\sigma} d$ and $t"=(a,c,d)\in\mathcal{B}$. 
Observe that in both cases, $\{t,t',acd\}\subset\mathcal{B}$ implies a conflict on $\{a,c,b,d\}$. 
\item $t'=(c,b,d)$: If $a <_\sigma b <_\sigma c$, then $t'$ being consistent with $\sigma$ implies that $d <_\sigma b <_\sigma c$. As $t$ is the only betweenness constraint of $R_\sigma[\{a,b,c,d\}]$ inconsistent with $\sigma$, either $t"=(d,a,c)\in\mathcal{B}$ or $t"=(a,d,c)\in\mathcal{B}$. Likewise, if $b <_\sigma a <_\sigma c$ holds, then $t"=(d,a,c)\in\mathcal{B}$. 
Observe that in both cases, $\{t,t',t"\}\subset\mathcal{B}$ implies a conflict on $\{a,c,b,d\}$. 
\item $t'=(c,d,b)$: If $a <_\sigma b <_\sigma c$, then $t'$ being consistent with $\sigma$ implies that $a<_{\sigma} b<_{\sigma} d<_{\sigma} c$. As $t$ is the only betweenness constraint of $R_\sigma[\{a,b,c,d\}]$ inconsistent with $\sigma$, we have $t"=(a,d,c)\in\mathcal{B}$. Likewise, if $b <_\sigma a <_\sigma c$ holds, then $t"=(d,a,c)\in\mathcal{B}$ or $t"=(a,d,c)\in\mathcal{B}$.  
Observe that in both cases, $\{t,t',t"\}\subset\mathcal{B}$ implies a conflict on $\{a,c,b,d\}$.
\end{itemize}
\end{proof}

\subsection{Reduction rule and conflict packing}

The notion of a \emph{sunflower} is classic in the context of kernelization~\cite{Abu10,ALS09,BP11,GM10}. In our context, a \emph{sunflower} $\mathcal{S}$ is defined a set of conflicts $\{C_1, \dots, C_m\}$ pairwise intersecting in exactly one triplet $t$, called the \emph{centre} of $\mathcal{S}$. Clearly the existence of a sunflower of size $m\geqslant k+1$ forces the edition of its centre. Observe that a betweenness constraint can be edited in two different manners. To avoid this drawback, we use a refined definition of a sunflower in the context of ordered system of betweenness constraints. As shown in~\cite{Per13}, this leads to a reduction rule which will be the unique reduction rule of our linear kernel.

\begin{definition}[Simple sunflower~\cite{Per13}] 
Let $\mathcal{S} = \{C_1, \dots, C_m\}$ be a sunflower of an ordered system of betweenness constraints $R_\sigma = (V, \mathcal{B}, \sigma)$. Then $\mathcal{S}$ is \emph{simple} if for every  $1 \leqslant i \leqslant m$, its centre $t$ is the unique betweenness constraint of $R_\sigma[C_i]$ inconsistent with $\sigma$. 
\end{definition}

\begin{polyrule}[Simple sunflower rule \cite{Per13}] \label{rule:simplesun} 
Let $(R=(V,\mathcal{B}),k)$ be an instance of \BIT{} and $\sigma$ be a vertex ordering of $V$. If $\mathcal{S} = \{C_1, \dots, C_m\}$, with $m > k$, is a simple sunflower of centre $t$ of $R_\sigma = (V, \mathcal{B}, \sigma)$, then return $(R',k-1)$ where $R'=R\odot_{\sigma}\{t\}$. 
\end{polyrule}

As in the context of \RTI{}, a conflict packing has to allow a large overlapping of the conflicts in presence of a seed.

\begin{definition}[Seed]
Let $C=\{a,b,c,d\}$ be a conflict of a dense system of betweenness constraints $R=(V,\mathcal{B})$. A vertex $a$ is a \emph{seed} for $C$ if $C$ is a conflict in $R\odot \{t\}$ for every possible betweenness constraint $t$ with $V(t)=\{b,c,d\}$.
\end{definition}

\begin{definition}[Conflict Packing]
Two distinct conflicts $C$ and $C'$ of a dense system of betweenness constraints $R=(V,\mathcal{B})$ are \emph{independent} if:
\vspace{-0.1cm}
\begin{itemize}
\item $|C\cap C'|\leqslant 2$, or
\vspace{-0.1cm}
\item $|C\cap C'|=3$ and the leaf $\ell\in C\setminus C'$ is a seed for $C$ (or the leaf $\ell'\in C'\setminus C$ is a seed for $C'$)
\end{itemize}
A \emph{conflict packing} is a maximal collection of pairwise independent conflicts of $R$.
\end{definition}

\begin{lemma}  \label{lem:cpbit}
Let $\mathcal{C}$ be a conflict packing of a dense system of betweenness constraints $R = (V, \mathcal{B})$. If $\mathbf{BTI}(R)\leqslant k$, then $|V(\mathcal{C})|\leqslant 4k$.
\end{lemma} 
\begin{proof}
Suppose that $\mathcal{C}$ contains $l$ conflicts $C_1,\dots C_l$. We prove that $\mathbf{BTI}(R)\geqslant l$.
Let $t$ be a betweenness constraint such $C_j$ ($1\leqslant j\leqslant l$) is not a conflict in $R'=R\odot \{t\}$. Let us prove that for every $i\neq j$, $1\leqslant i \leqslant l$, $C_i$ is a conflict of $R'$. 
If $V(t)\not\subseteq C_i$, then $C_i$ is still a conflict in $R'$. So suppose that $V(t)=C_i\cap C_j$, implying that either $v_i\in C_i\setminus C_j$ is a seed for $C_i$ or $v_j\in C_j\setminus C_i$ is a seed for $C_j$. Observe that the latter case is not possible because $v_j$ being a seed implies that editing $t$ cannot solve the conflict $C_j$. Similarly, in the former case, since $v_i$ is a seed, $C_i$ remains a conflict in $R'$. It follows that every conflict of $\mathcal{C}$ requires the edition of a distinct betweenness constraint, implying that $\mathbf{BTI}(R)\geqslant l$ as claimed. 
As by assumption $\mathbf{BTI}(R)\leqslant k$, we have $|\mathcal{C}|\leqslant k$ and thereby $|V(\mathcal{C}|\leqslant 4k$.
\end{proof}

\begin{lemma} \label{lem:realcpbit}
Let $\mathcal{C}$ be a conflict packing of a dense system of betweenness constraints $R = (V, \mathcal{B})$. There exists an ordering $\sigma$ of $V$ such that every betweenness triplet $t$ inconsistent with $\sigma$ satisfies $V(t) \subseteq V(\mathcal{C})$. Such an ordering can be computed in polynomial time.
\end{lemma}
\begin{proof}
Let $V' = V \setminus V(\mathcal{C})$. Observe that $R' = R[V']$ is consistent with a (unique) ordering $\sigma'$. Moreover, notice that for every vertex $a \in V(\mathcal{C})$ the instance $R_a = R[V' \cup \{a\}]$ is consistent (otherwise $\mathcal{C}$ would not be maximal). Thereby there exists a unique ordering $\sigma'_a$ such that every triplet of $R_a$ is consistent with $\sigma'_a$. In other words, $\sigma'$ contains a unique pair of consecutive vertices $(u,w)$ such that we have $u <_{\sigma'_a} a <_{\sigma'_a} w$. We call such a pair the \emph{locus} of $a$. Using again the maximality argument on $\mathcal{C}$, we know that for any pair of vertices $a,b \in V(\mathcal{C})$ the instance $R_{ab} = R[V' \cup \{a,b\}]$ is consistent. Hence if $a$ and $b$ have different loci then $R_{ab}$ is clearly consistent with the ordering obtained from $\sigma'$ by inserting $a$ and $b$ in their respective loci. 
It remains to consider the case where $a$ and $b$ have the same locus. Given two consecutive vertices $u$ and $w$ of $\sigma'$, let $L_{uw} \subseteq V(\mathcal{C})$ be the subset of vertices of $V(\mathcal{C})$ whose locus is $uw$. Given $a,b \in L_{uw}$ we define the binary relation $<_{uw}$ on $L_{uw}$ as follows: 
$$a <_{uw} b \mbox{ iff } (a,b,w) \in \mathcal{B}$$
	
\begin{claim} \label{claim:orderbit}
The relation $<_{uw}$ is a strict total ordering.
\end{claim}
\begin{proofclaim}
Observe that $<_{uw}$ is asymmetric by definition. Hence it remains to prove that $<_{uw}$ is transitive.
Suppose that $a,b,c \in L_{uw}$ are such that $a <_{uw} b$ and $b <_{uw} c$. This means that $(a,b,w) \in \mathcal{B}$ and $(b,c,w) \in \mathcal{B}$. Now assume that $(a,c,w) \notin \mathcal{B}$, \ie w.l.o.g. that $(w,a,c) \in \mathcal{B}$. Then $\{a,b,c,w\}$ is a conflict for any choice of $\{a,b,c\}$. Hence $w$ is a seed of the conflict $\{a,b,c,w\}$, meaning that the conflict packing $\mathcal{C}$ is not maximal: contradiction. It follows that $(a,c,w) \in \mathcal{B}$ and thereby $a <_{uw} c$. Hence $<_{uw}$ is transitive.
\end{proofclaim}
	
The ordering $\sigma$ is built from $\sigma'$ as follows: let $u$ and $w$ be two consecutive vertices $u$ and $w$ such that $L_{uw} \neq \emptyset$, then insert the vertices of $L_{uw}$ according to $<_{uw}$. Such an ordering can clearly be built in polynomial time. Let us prove that every betweenness constraint $t\in\mathcal{B}$ 
such that $V(t) \cap V' \neq \emptyset$ is consistent with $\sigma$:
\vspace{-0.1cm}
\begin{itemize}
\item $V(t) \subseteq V'$: since $t$ is consistent with $\sigma'$ and $\sigma$ is an extension of $\sigma'$, $t$ is consistent with $\sigma$. 
\vspace{-0.1cm}
\item $|V(t) \cap V'| = 2$ with $a\in V(t)\setminus V'$: snce $R_a$ is consistent and $a$ is inserted at its locus, $t$ is consistent with $\sigma$. 
\vspace{-0.1cm}
\item $|V(t) \cap V'| = 1$ with $a,b \in V(t)\setminus V'$: two cases may occur. If $a$ and $b$ have different loci then $t$ is consistent with $\sigma$ since $R_{ab}$ is consistent. In the other case, we know that $t$ is consistent since $a$ and $b$ have been inserted in their common locus according to $<_{uw}$.
\end{itemize}
\end{proof}

\subsection{Kernel size}

We can now prove that the simple sunflower reduction rule leads to a linear kernel.

\begin{theorem} \label{thm:bit}
The \BIT{} problem admits a kernel with $5k$ vertices.
\end{theorem}
\begin{proof}
Let $ R = (V, \mathcal{B})$ be an instance of \BIT{}. 
We prove that if $|V|> 5k$, then in polynomial time we can either compute an edition set $\mathcal{S}$ of size at most $k$ such that $R\odot\mathcal{S}$ is consistent, or decide that such a set does not exist. To that aim, we first compute a conflict packing  $\mathcal{C}$ of $R$ and then an ordering $\sigma$ satisfying the conditions of Lemma~\ref{lem:realcpbit}. By Lemma~\ref{lem:cpbit}, if $\mathbf{BTI}(R)\leqslant k$, then $|V(\mathcal{C})| \leqslant 4k$. Hence $V'= V \setminus V(\mathcal{C})$ contains at least $k + 1$ vertices that do not belong to any triplet inconsistent with $\sigma$.

\begin{claim} \label{claim:sun}
$R_{\sigma}=(V,\mathcal{B},\sigma)$ contains a simple sunflower $\mathcal{S} = \{C_1, \dots, C_m\}$ with $m > k$.
\end{claim}
\begin{proofclaim}
Let $t$ be any betweenness constraint inconsistent with $\sigma$ and $V'' =\{d_1, \dots, d_{k+1}\}$ be a subset of $k + 1$ vertices of $V'$. By Lemma~\ref{lem:conflictbit}, for every $d_i\in V"$, $C_{d_i} = V(t) \cup \{d_i\}$ is a conflict. 
By construction of $\sigma$ (see Lemma~\ref{lem:realcpbit}), $t$ is the unique betweenness constraint of $C_{d_i}$ inconsistent with $\sigma$. Hence $\mathcal{S} = \{C_1, \dots, C_{k+1}\}$ is a simple sunflower of center $t$.
\end{proofclaim}
	
Using Rule~\ref{rule:simplesun} we have to edit the center $t \in \mathcal{B}$ of $\mathcal{S}$ according to $\sigma$. Observe that $R\odot_{\sigma}\{t\}$ still contains at least $k + 1$ vertices that do not belong to any betweenness constraint inconsistent with $\sigma$. It follows that the whole set of betweenness constraints inconsistent with $\sigma$ must be edited. Hence if $R$ contains less than $k$  betweenness constraints inconsistent with $\sigma$, then we return a small trivial \textsc{Yes}-instance; otherwise we return a small trivial \textsc{No}-instance.
 \end{proof}

\section{Conclusion}

In this paper we develop the Conflict Packing technique, which allows us to improve several kernelization results for editing problems on dense instances in an unifying manner. The main result we obtain is a linear vertex-kernel for \RTI{}, improving the previous bound of $O(k^2)$ vertices~\cite{GM10}. We also use Conflict Packing to prove that the well-known \FAST{} problem admits a linear vertex-kernel. While such a result was already known to exist~\cite{BFG+11}, we believe that our proof is simpler. Morever, Conflict Packing can be used to obtain a quadratic vertex-kernel for the \FASBT{} problem in a similar way. Here again, such a result was already known to exist~\cite{GX13}, using the concept of so-called bimodules. We however find interesting to provide an unified framework to cope with such problems. Finally, we provide a linear vertex-kernel for the \BIT{} problem, the size of which has been recently improved using some of the results we present here~\cite{Per13}. It is worth to notice that all aforementioned problems share similar properties (besides being defined on dense instances). We would also like to mention that unlike the known linear vertex-kernels for \FAST{}~\cite{BFG+11} and \BIT{}~\cite{Per13}, our algorithms do not use any constant-factor approximation algorithm as a subroutine. It would thus be very interesting to determine whether the Conflict Packing technique can be applied to problems known not to admit any constant-factor approximation algorithm. \\

	
We conclude by addressing some open problems. First, a natural question that arises from the result on \FASBT{} is whether a \emph{linear} vertex-kernel can be achieved for this problem? Another important question is whether the \RTI{} problem admits a constant-factor approximation algorithm. Such an algorithm together with the safe partition reduction rule would also imply a linear vertex-kernel for the problem. Moreover, there exists a lot of problems that remain $NP$-hard on dense instances (see e.g.~\cite{AA07,GM77}): we believe that our technique will yield linear vertex-kernels for a number of these problems as well. Finally, it would be very interesting to determine whether a more general framework can be built on the Conflict Packing technique, providing some meta-theorem on kernelization algorithms.

\paragraph{Acknowledgments.} Research supported by the AGAPE project (ANR-09-BLAN-0159) and the projet KERNEL from the Languedoc-Roussillon program \emph{Chercheur d'avenir}.

\bibliographystyle{plain}
\bibliography{FAST-RTI}

\end{document}